\newtheorem{theorem}{Theorem}[section]
\newtheorem{lemma}[theorem]{Lemma}
\newtheorem{prop}[theorem]{Proposition}
\newtheorem{coro}[theorem]{Corollary}
\theoremstyle{definition}
\newtheorem{definition}{Definition}
\newtheorem{remark}{Remark}
\newcommand{\RR}{\mathbb{R}}
\newcommand{\CC}{\mathbb{C}}
\newcommand{\NN}{\mathbb{N}}
\newcommand{\TT}{\mathbb{T}}
\newcommand{\ZZ}{\mathbb{Z}}
\newcommand{\Bragg}{\mathcal{S}}
\newcommand{\Graph}{\mathcal{G}}
\newcommand{\ev}{\mathcal{E}}
\newcommand{\ext}{\mathcal{X}}
\newcommand{\cD}{\mathcal{D}}
\newcommand{\cZ}{\mathcal{Z}}
\newcommand{\vL}{\varLambda}
\newcommand{\dif}{\widehat{\gamma}}
\newcommand{\charF}{{\bf 1}}
\DeclareMathOperator{\vol}{vol}
\DeclareMathOperator{\len}{len}
\newcommand{\Hm}[1]{\leavevmode{\marginpar{\tiny%
$\hbox to 0mm{\hspace*{-0.5mm}$\leftarrow$\hss}%
\vcenter{\vrule depth 0.1mm height 0.1mm width \the\marginparwidth}%
\hbox to
0mm{\hss$\rightarrow$\hspace*{-0.5mm}}$\\\relax\raggedright #1}}}
\begin{document}

\title[Extinctions and Correlations ]{ Extinctions and Correlations \\for Uniformly Discrete Point Processes\\
with Pure Point Dynamical Spectra}

\author{Daniel Lenz}
\address{Mathematisches Institut, Fakult\"at f\"ur Mathematik und Informatik, Friedrich-Schiller-Universit\"at Jena,
D- 07737 Jena, Germany}
\email{daniel.lenz@uni-jena.de}
\urladdr{http://www.tu-chemnitz.de/mathematik/analysis/dlenz}

\author{Robert V.\ Moody}
\address{Department of Mathematics and Statistics,
University of Victoria, \newline
\hspace*{12pt}Victoria, BC, V8W3P4, Canada}
\email{rmoody@uvic.ca}
\date{\today}
\thanks{}

\begin{abstract}
  The paper investigates how correlations can completely specify a uniformly
  discrete point process. The setting is that of uniformly discrete   point sets in real space for which the corresponding dynamical
  hull is ergodic. The first result  is that all of the essential physical information
  in such a system is derivable from its $n$-point correlations, $n= 2, 3,
  \dots$.  If the system is pure point diffractive
an  upper bound on the number of correlations required can be derived
  from the cycle structure of a graph formed from the dynamical and Bragg
  spectra. In particular, if the diffraction has no extinctions, then the $2$ and $3$
  point correlations contain all the relevant information.
\end{abstract}

\maketitle

\section{The Setting}

\subsection{Quasicrystals and dynamical systems} The defining feature of physical cyrstals and quasicrystals is
the prominent appearance of Bragg peaks in their diffraction
diagrams. Mathematically the diffraction is a positive measure and
the Bragg peaks comprise the pure point component
of this measure. In a `perfect' crystal or quasicrystal, the diffraction should
be entirely pure point, and that is the situation that we shall assume here.
We shall simply refer to these as {\bf quasicrystals} in the sequel.

The diffraction does not, on its own, determine the internal structure of the
quasicrystal that created it. However, the diffraction does determine the
$2$-point correlation, which is its Fourier transform. Under appropriate
conditions (assumptions {\bf A(i),(ii)} below), knowledge of \emph{all}
the correlations ($2$-point, $3$-point, etc.) does determine the internal
structure (Thm.~ \ref{momentsToCorrelations}).  The primary objective of this paper is  to explore the details
behind this in the case that the diffraction is a pure point
measure.  In this case, under fairly mild conditions, one does not at all
need the entire set of correlations. In fact in the best situation, where there are no extinctions in
the Bragg spectrum (this term is explained below), the $2$- and $3$-point
correlations alone (in fact just the $3$-point correlations) are enough to determine the structure (Thm.~\ref{main2} and its Corollary).

Our approach here is to use a setting familiar from statistical mechanics and
from the theory of tilings and long-range aperiodic order. Rather than deal
with a single quasicrystal $\vL$, we work instead with translation invariant
families of them. The intuition is that such a family, $X$, will consist of
all those quasicrystals which are in some sense locally indistinguishable from
one another, or which cannot be isolated from one another by the physical
considerations at hand. As for the individual quasicrystals, we model these
simply as uniformly discrete\footnote{A subset $\vL$ of $d$-dimensional space
  $\RR^d$ is uniformly discrete (or more specifically $r$-uniformly discrete)
  if for some $r>0$ and for all $x,y \in \vL$ with $x \ne y$, $|x-y| \ge r$.}
point sets in space, with the points representing the positions of the atoms.
Since the dimension does not play any special role here, we work in general
$d$-dimensional space $\RR^d$.

Thus our main result deals with general ergodic
uniformly discrete point processes. It is always the case that the
dynamical spectrum is generated, as a group, by the diffraction spectrum. But here
we prove that if it has pure point spectrum and the dynamical
spectrum is expressible as a sum of {\em finitely many} copies of the diffraction spectrum then the point process is determined by its $n$-point-correlations
for some finite $n$. In the context of aperiodic order this becomes particularly relevant as it gives a way of assessing the degree of `complexity' of the long range order.

\subsection{Background}
It may be of interest to briefly discuss the reasons that such an elaborate formalism is relevant to what might seem a fairly straightforward exercise in spectral theory. Pure point diffraction from aperiodic structures was not predicted,
either by mathematicians or crystallographers. When it did appear,
both in tiling theory and experimentally in the discovery of aperiodic
metallic alloys, the projection method was quickly utilized and it was
generally believed that one could use standard techniques like the
Poisson summation formula (applied to lattices in higher dimensions)
to explain the diffraction.

It was A.~Hof who, in his much-cited papers \cite{Hof1,Hof2}, showed that
diffraction in aperiodic structures is not business as usual.
The Bragg spectrum of an aperiodic material is
not lattice-like as in the periodic case, but is typically dense in
Fourier space. The problem is that for a countable aperiodic set
$\Lambda$ of scatterers in $\mathbb R^d$ the Fourier transform $\hat
\nu$ of their Dirac comb $\nu = \sum_{x\in \Lambda} \delta_x$ is not
in general a measure. The sums involved diverge, even locally. This is
in contrast to the lattice case. For this reason the theory of
diffraction has developed by defining the diffraction as the Fourier
transformation of the volume averaged autocorrelation. It is the
`quadratic' nature of autocorrelation which  produces the necessary
convergence, and this is now the standard approach to diffraction in
the aperiodic case.

Assume now that we are in the case of a countable and uniformly
discrete set of scatterers. It was pointed out in
\cite{RW} that a good way to study an aperiodic set was to follow
ideas from statistical mechanics and form a compact space from its
translation orbit, the so-called hull $X$. This is a dynamical system
(with $\mathbb R^d$ as the acting group) and allows one to use
spectral theory. S. Dworkin \cite{Dworkin} then showed how the
dynamical spectrum could be linked to the diffraction spectrum by
using spectral measures. This linkage is now often called Dworkin's argument.

It was the use of hulls and Dworkin's argument that first allowed rigorous proof of the pure pointedness of model sets (or cut and project sets).
This is a good example of a situation where the result is seemingly clear from the Poisson summation formula, but on closer inspection one is confronted with divergent sums with no obvious mathematical meaning.

The first proof by A.~Hof, and its full generalization to all model
sets by M. Schlottmann \cite{Martin}, of the pure pointedness of
diffraction from model sets uses the fact that the hull is
measure-theoretically a compact Abelian group, and so pure point,
followed by Dworkin's argument using spectral measures. By the way,
unlike the periodic case, the hull is topologically {\em not} a group, and
indeed its highly subtle topology has been the focus of many
mathematicians recently (see \cite{KP,Bell} for reviews and further discussion).

Dworkin's argument still left the precise connection between the
diffraction and dynamical spectrum unresolved. Further developments on
the hull and its connection to diffraction and to point processes were
made in \cite{Bell} and \cite{Gouere} where it
is shown that under the assumption of ergodicity the autocorrelation
of the points sets of $X$ exist almost surely (in the sense of the
invariant measure $\mu$ on $X$) and almost surely are equal to the
first moment of the Palm measure of $X$.
In \cite{XR} this was extended to show that all
the higher correlations of $\Lambda \in X$ exist almost surely and
they completely determine $\mu$. This fact is not generally true for
point processes but here follows from the assumed uniform discreteness
of the point sets under consideration. In the pure point case it comes
pretty much for free from the spectral structure, as we see in the
present paper.

In \cite{LMS} (see \cite{Gouere,BL,LS} as well) it is proved that the diffraction is pure
point if and only if the dynamical spectrum is pure point. This is
remarkable since we know that in general the diffraction can fail to
see great chunks of the dynamical spectrum - even the pure point part
of the spectrum. In \cite{XR} the diffraction/dynamics connection is made
even more precise by showing that there is an isometric embedding of
$L^2(\mathbb R^d, \omega)$ into $L^2(X,\mu)$, where $\mathbb R^d$ is
the Fourier space with its diffraction measure $\omega$. This
embedding allows one to see how the eigenfunctions transfer across.
Precisely, each Bragg peak located at position $k$ gives rise to an
eigenfunction $f_k$ whose value at the point set $\Lambda \in X$ is
almost surely
\[ f_k(\Lambda) = \lim_{R\to\infty} \frac{1}{{\rm vol}\, C_R} \sum_{x \in \Lambda \cap C_R} e^{2 \pi i k\cdot x} \,.\]

We mention this, first because the convergence of this sum away from
$0$ is precisely what is called the Bombieri-Taylor conjecture in \cite{Hof1,Hof2}, (which was precisely that, until recently). There is a proof of
convergence in the $L^2$- sense in \cite{XR}. There is also a recent
proof of the point-wise convergence of the limit within the context  of uniform convergence in ergodic theorems in \cite{Lenz}. From the point
of view of our present paper, it is these functions $f_k$ which lie at
its heart. It is interesting to note that the way in which the isometric
embedding connection between diffraction and  dynamics is defined, the
eigenfunctions are nowhere in sight. It is only in the
$L^2$-completion that the eigenfunctions appear, and even then the way
in which they map (by the Bombieri-Taylor formula) is nothing like the
original defining map, and has to be proved.

Now we come to our present paper. Of course the underlying concern of
much of diffraction theory is that the inverse problem (resolving
structure from the diffraction) has no unique solution in general. The
problem is that the embedding of diffraction into dynamics is not
surjective. This problem is exacerbated in the aperiodic case.
Our setting is an ergodic dynamical
system of uniformly discrete point sets which are pure point
diffractive almost surely.  The main result of our paper is to show rather
precisely the significant role that extinctions (places in the
dynamical spectrum where there are no Bragg peaks) play in this
ambiguity.

In \cite{Mermin} D.~Mermin made the
remarkable suggestion that the second and third correlations should
always determine the structure. The argument made there is quite simple --
even trivial -- except that it again deals with quantities facing the
same problem originally pointed out in \cite{Hof1}; they are not convergent
(see also \S\ref{Origins}). The author was well aware of
the difficulties of his argument: he writes {\it ``Even
  granting that I have begged the question of when the density has a
  Fourier transform, and when the auto-correlation functions exist,
  this informal Fourier space argument that the identity of all second
  and third order correlations implies the identity of all higher
  order correlations is disarmingly trivial. I would very much like to
  learn of a comparably simple informal argument or an instructional
  counterexample in position space."}  Our paper is a response to this
question. If there are no extinctions, then the second and third correlations
do suffice. If the extinctions are not too bad, we can at least get away with knowing
only finitely many types of point correlations.  As further discussed in Section \ref{Final} 
there are also recent results which show that in fact there are situations where one really does need
higher moments than just the second and third to resolve aperiodic
point sets (in fact, even model sets), \cite{XR3}.

Apart from its mathematical interest and the potential directions for
further development, the results of our present paper seem to be
physically relevant. The detailed atomic structures of quasicrystals
are basically unknown in spite of over 20 years of work by
theoreticians and experimentalists. Model sets are one of the primary
modelling devices in the subject and diffraction is a fundamental
tool. It is relevant to know the controlling influences on diffraction
and to know how close diffraction, particularly in model sets, can
come to determining the underlying structure.

\subsection{Hulls}
The basic objects of interest in this paper are pairs $(X,\mu)$,
where $X$ is a set of $r$-uniformly discrete
point sets $\vL$ of some real space $\RR^d$ for some $r>0$
and $\mu$ is a probability measure on $X$. The assumptions that we need
to make on $(X,\mu)$ are listed in {\bf A(i), A(ii), A(iii)} below.

 \smallskip
Throughout the paper $C_R$, $R>0$, denotes the open cube $(-R/2,R/2)^d \subset
\RR^d$.

\smallskip
There is a uniform topology (called the {\bf local topology}) on the set $\cD_r(\RR^d)$ of all $r$-uniformly discrete subsets of $\RR^d$. The uniformity is generated from the collection
of all sets (entourages) of the form
$U(K,\epsilon)$, $K \subset \RR^d$ being compact, and $\epsilon >0$, where
\begin{equation}
U(K,\epsilon) = \{(\vL, \vL') \in \cD_r(\RR^d) \times \cD_r(\RR^d) \, : \,
\vL \cap K \subset \vL' + C_\epsilon \; \mbox{and} \;\vL' \cap K \subset \vL + C_\epsilon  \} \,.
\end{equation}

Thus $\vL$ and $\vL'$ are `close' if on some (`large') compact $K$
and for some (`small') $\epsilon >0$, the points of $\vL$ that are within $K$ also
lie in the $\epsilon$-cubical  neighbourhoods of the points of $\vL'$,
and vice-versa. It is relatively easy to see that $\cD_r(\RR^d)$ is compact
in this topology \cite{RW} and that the translation action $T$:
\[T_x(\vL) := x + \vL \in X, \;\mbox{for all}\;   x\in \RR^d\]
of $\RR^d$ on it is continuous. An alternative description of this topology using the functions $N_f$ below can be found in \cite{Bell}. 

We assume
\begin{itemize}
\item[\bf{A(i)}] $X$ is a closed translation invariant subset of $\cD_r(\RR^d)$;
\item[\bf{A(ii)}] $\mu$ is an ergodic probability Borel measure on $X$.
\end{itemize}

{\bf A(i)} obviously implies that $X$ is compact and, together with $\bf
A(ii)$, the pair $(X,\mu)$ along with the group action of $\RR^d$ by
translation is a dynamical system, both in the topological and measure
theoretic senses. The assumption of ergodicity is that $\mu$ is a translation
invariant measure and $X$ cannot be decomposed into two measurable invariant
subsets which each have positive measure. We can, if we wish interpret $\mu$
as a measure on $\cD_r(\RR^d)$ whose support lies inside $X$. This makes it clear that $\mu$ is the actual relevant piece of data. The space $X$ is only noted for convenience.

The basic open neighbourhoods of $\vL$ defined by the uniformity on $X$ are of the form
\[ U(K,\epsilon)[\vL] := \{ \vL' \in X \,:\, (\vL, \vL') \in U(K,\epsilon) \} \, . \]
These consist of the point sets $\vL'$ that are sufficiently close to making
the same pattern as $\vL$ within the compact set $K \subset \RR^d$. The
interpretation of $\mu(U(K,\epsilon)[\vL])$ is that it is the probability that
a random element $\vL'$ of $X$ will lie in $U(K,\epsilon)[\vL]$. The measure
$\mu$ thus gives the information about what patterns are possible and what
their probabilities of occurrence are. Its support specifies which subsets of
$\cD_r(\RR^d)$ are relevant. The ergodicity says that, when viewed from the
origin, the translations of any element $\vL$ from the support of $\mu$ will,
almost surely, faithfully represent all possible local patterns with the
correct frequencies.

We take the attitude that this is all we can hope to know about our physical system, and
thus our objective is to determine $\mu$ from other, physically observable, data. In our case this
other data will consist of various correlations of the system $(X,\mu)$.

\subsection{Diffraction and pure pointedness}

Let $S(\RR^d)$ denote the Schwartz space of all complex-valued infinitely many
times differentiable rapidly decreasing
functions on the real space $\RR^d$.

For each $n= 1,2, \dots$ the $n+1$-{\bf point correlation} of $\vL \in
\cD_r(\RR^d)$ is the measure $\gamma^{(n+1)}_\vL$ on $\RR^d \times \dots
\times \RR^d$ ($n$-factors) defined by
\[\gamma^{(n+1)}_\vL(F) = \lim_{R\to \infty} \frac{1}{\vol C_R} \sum_{x,y_1, \dots, y_n \in
  \vL \cap C_R} F(-x + y_1, \dots, -x+y_n) \, \] for all $F\in S(\RR^d \times
\dots \times \RR^d)$, if this limit exists.

\begin{theorem} \label{diffraction} \cite{Bell, Gouere, XR}
Let $(X,\mu)$ satisfy {\rm \bf A(i)} and {\rm \bf  A(ii)}. Then
\begin{itemize}
\item[(i)] for  $\mu$ almost every $\vL \in X$,  all of the $n$-point correlations $\gamma^{(n)}_\vL$ exist. Furthermore, they are almost surely
independent of the point-set $\vL$ chosen in $X$;
\item[(ii)] the $2$-point correlation is almost surely Fourier transformable.
\end{itemize}
\end{theorem}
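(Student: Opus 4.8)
The plan is to realize each correlation as a pointwise ergodic average and to invoke the Wiener pointwise ergodic theorem for the $\RR^d$-action $T$ along the F\o lner sequence of cubes $C_R$. Fix $n$ and $F\in S(\RR^d\times\dots\times\RR^d)$, and introduce the observable
\[ \phi_F(\vL)=\sum_{w_1,\dots,w_n\in\vL}F(w_1,\dots,w_n)\,. \]
First I would check that this sum is absolutely convergent and that $\phi_F$ is bounded and Borel measurable on $X$: since $\vL$ is $r$-uniformly discrete, the number of its points in a ball of radius $\rho$ grows only polynomially in $\rho$, while $F$ decays faster than any polynomial, so the tails are summable uniformly in $\vL$. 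Substituting $w_i=y_i-x$ in the defining sum for $\gamma^{(n+1)}_\vL$ and using $T_{-x}\vL=\vL-x$, the correlation average becomes
\[ \frac{1}{\vol C_R}\sum_{x\in\vL\cap C_R}\phi_F(T_{-x}\vL) \]
up to the terms in which some $y_i$ fails to lie in $C_R$.

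Second, I would dispose of these boundary terms. For $x$ lying deep inside $C_R$ the omitted terms force some argument of $F$ to be large, so by the rapid decay of $F$ (again with the polynomial point count from uniform discreteness) their total contribution is small; for the $O(R^{d-1})$ points $x$ within bounded distance of $\partial C_R$ one uses only the uniform bound $\|\phi_F\|_\infty$, and dividing by $\vol C_R=R^d$ shows the boundary contribution vanishes. It then remains to pass from the point-sum to a genuine space average. For this I would fix the bounded observable $H(\vL)=\sum_{x\in\vL\cap C_1}\phi_F(T_{-x}\vL)$, which lies in $L^1(X,\mu)$, and compute, by interchanging sum and integral and using $T_{-x}T_{-t}=T_{-(x+t)}$, that $\tfrac{1}{\vol C_R}\int_{C_R}H(T_{-t}\vL)\dd t$ equals the point-sum above with each $\phi_F(T_{-y}\vL)$ weighted by $\vol(C_R\cap(y+C_1))\in[0,1]$, a weight equal to $1$ except within bounded distance of $\partial C_R$. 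Hence the two differ by another negligible boundary term, and the Wiener pointwise ergodic theorem together with ergodicity ({\bf A(ii)}) gives, for $\mu$-a.e.\ $\vL$,
\[ \frac{1}{\vol C_R}\int_{C_R}H(T_{-t}\vL)\dd t\;\longrightarrow\;\int_X H\dd\mu\,, \]
a constant independent of $\vL$. This yields existence and a.e.-constancy of $\gamma^{(n+1)}_\vL(F)$ for the fixed $F$. To obtain simultaneous existence for all test functions, and hence existence of the limiting measure, I would choose a countable dense subset of each Schwartz space, intersect the corresponding conull sets, and extend to all $F$ using that the prelimit averages define uniformly bounded functionals of $F$ (their masses being controlled by uniform discreteness). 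This proves (i).

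For (ii) I would observe that for $n=1$ the finite approximants are honest autocorrelations: writing $\nu_R=\sum_{x\in\vL\cap C_R}\delta_x$, the measure $\tfrac{1}{\vol C_R}\,\nu_R*\widetilde{\nu_R}$ (with $\widetilde{\nu_R}$ the reflected measure) tests against $F$ to give, up to the harmless reflection $F\mapsto F(-\,\cdot\,)$, the $R$-th prelimit average defining $\gamma^{(2)}_\vL$. Each such measure is positive definite, being of the form (finite measure) convolved with its reflection, and positive definiteness is preserved under the vague limit from part (i), so $\gamma^{(2)}_\vL$ is a positive definite measure; uniform discreteness makes it translation bounded. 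The standard fact that a translation bounded positive definite measure is Fourier transformable, with a positive measure as transform, then gives the claim.

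The main obstacle I expect is the passage from point-sums to space-integrals in (i): setting up the right $L^1$ observable $H$, verifying the Campbell-type identity relating $\int_{C_R}H(T_{-t}\vL)\dd t$ to the sum over $\vL$, and controlling all the boundary discrepancies uniformly. This is exactly where the two standing hypotheses enter decisively: uniform discreteness bounds the point counts (so $\phi_F$ and $H$ are bounded and the boundary layers are negligible), and ergodicity forces the limit to be the constant $\int_X H\dd\mu$ rather than a nonconstant invariant function. Everything else is either a routine decay estimate or an appeal to the harmonic-analytic fact used in (ii).
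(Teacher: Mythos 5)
Your proposal is correct. Note that the paper does not prove Theorem \ref{diffraction} itself but cites it from \cite{Bell,Gouere,XR}; your argument --- rewriting the correlation as $\frac{1}{\vol C_R}\sum_{x\in\vL\cap C_R}\phi_F(T_{-x}\vL)$, converting the point-sum to a space average via the auxiliary observable $H$ (a Campbell/Palm-measure identity in disguise), and applying the Wiener pointwise ergodic theorem along cubes, with positive definiteness plus translation boundedness giving transformability of $\gamma^{(2)}$ --- is essentially the route taken in those references.
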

The common $n$-point correlations are denoted simply as $\gamma^{(n)}$, and even
more simply as $\gamma$ for $n=2$. The Fourier transform of $\gamma^{(2)}_\vL$
is almost surely the Fourier transform $\dif$ of $\gamma$.

\begin{definition} $\dif$ is the {\bf diffraction} of $(X,\mu)$.
\end{definition}

Starting with the work of Hof \cite{Hof1} the  rigorous mathematical study  of diffraction for aperiodic order has attracted quite some attention in recent years. We refer to \cite{Lag,Lenz1,Lenz2} for recent surveys.

A basic idea is that the correlations are, in principle, quantities that can be
physically measured.  Certainly measurement of the diffraction is standard,
and hence its Fourier transform, the $2$-point correlation, may be considered
as known. There are reports of inference of higher correlations through
fluctuation microscopy \cite{TGFPN}, though it is not clear that these correlations go
beyond pair-pair correlations arising from squaring the $2$-point correlation.

\smallskip The space $L^2(X,\mu)$ of square integrable functions on $X$ gets
an $\RR^d$ action through translation of functions: $(T_t f)(x) = f(-t+x)$. A
simple consequence of the translation invariance of $\mu$ is that this action
is unitary for the basic inner product $\langle \cdot, \cdot \rangle$ defined by
\[ \langle f,g \rangle = \int_X f\overline g d \mu    \,. \]

As usual $f \in L^2(X,\mu)$ with $f\neq 0$ is an eigenfunction of $T$  (to the eigenvalue  $k\in \RR^d$)  if $T_t f = \exp(- 2 \pi i
  k\cdot t) f$ for all $t\in \RR^d$.  The fact that $\mu$ is assumed ergodic
  implies that the multiplicity of each eigenvalue is one.
The dynamical system $(X,\mu)$ is said to be {\bf pure point} if
$L^2(X,\mu)$ has a Hilbert basis of eigenfunctions.

A key point is a theorem that relates $L^2(\RR^d, \dif)$
and $L^2(X,\mu)$ and then relates the two concepts of pure pointedness.
In order to discuss this further we need some more notation.
For each $f\in S(\RR^d)$ let
$N_f: X \rightarrow \CC$ be defined by
$$N_f(\vL) = \sum_{x\in \vL} f(x).$$

\begin{lemma}\label{SW} The algebra generated by the $N_f$, $f\in S(\RR^d)$, is
  dense in the algebra of continuous functions on $X$ equiped with the
  supremum norm. In particular, it is dense in $L^2 (X,\mu)$.
\end{lemma}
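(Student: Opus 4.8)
The plan is to prove the density statement in two stages, corresponding to the two assertions of the lemma.

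First I would establish density in $C(X)$. The natural tool is the Stone--Weierstrass theorem, so the goal is to show that the unital $*$-algebra $\mathcal{A}$ generated by the functions $N_f$, $f \in S(\RR^d)$, together with the constants, separates the points of $X$ and is closed under complex conjugation. Conjugation is immediate, since $\overline{N_f} = N_{\overline f}$ and $S(\RR^d)$ is closed under conjugation. The crux is therefore \emph{point separation}: given two distinct point sets $\vL, \vL' \in X$, I must produce an $f \in S(\RR^d)$ with $N_f(\vL) \ne N_f(\vL')$. Since $\vL \ne \vL'$, there is a point $x_0$ lying in one but not the other, say $x_0 \in \vL \setminus \vL'$. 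Using $r$-uniform discreteness, the points of $\vL$ and of $\vL'$ are each separated by at least $r$, so I can choose $f \in S(\RR^d)$ a smooth bump concentrated in a ball of radius less than $r/2$ about $x_0$; then $N_f(\vL)$ picks up the single contribution $f(x_0) \ne 0$ while $N_f(\vL')$ receives no contribution from any nearby point of $\vL'$. Here I would also need to observe that the $N_f$ are genuinely continuous on $X$ and that the relevant sums are locally finite on any compact region by uniform discreteness, so that a single well-localized $f$ cleanly separates the two configurations. Applying Stone--Weierstrass then gives density of $\mathcal{A}$ in $C(X)$ in the supremum norm.

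Second, I would pass from $C(X)$ to $L^2(X,\mu)$. Since $X$ is compact (by {\bf A(i)}) and $\mu$ is a probability measure, $C(X)$ is dense in $L^2(X,\mu)$, and the inclusion is norm-decreasing in the sense that $\|g\|_{L^2} \le \|g\|_\infty$ for $g \in C(X)$. Hence density of $\mathcal{A}$ in $(C(X), \|\cdot\|_\infty)$ immediately yields density of $\mathcal{A}$ in $(C(X), \|\cdot\|_{L^2})$, and composing with the density of $C(X)$ in $L^2(X,\mu)$ gives the second assertion.

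The main obstacle I anticipate is the point-separation step, and specifically the interplay between continuity of $N_f$ on $X$ and the choice of a localized $f$. One must be careful that a bump function supported near $x_0$ actually distinguishes $\vL$ from $\vL'$ robustly, i.e.\ that no point of $\vL'$ slips into the support of $f$; this is exactly where $r$-uniform discreteness and the definition of the local topology via the entourages $U(K,\epsilon)$ must be used, choosing the support radius small relative to $r$ and invoking that $x_0 \in \vL$ is isolated from $\vL'$. The remaining ingredients---conjugation-closedness, the presence of constants, compactness of $X$, and the $\|\cdot\|_{L^2} \le \|\cdot\|_\infty$ comparison---are routine once separation is in hand.
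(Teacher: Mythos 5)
Your approach is the same as the paper's: Stone--Weierstrass for the sup-norm density, with point separation via a Schwartz bump localized near a point of the symmetric difference, followed by the standard passage $\|\cdot\|_{L^2}\le\|\cdot\|_\infty$ plus density of $C(X)$ in $L^2(X,\mu)$. The separation argument you sketch is exactly right, and the $r$-uniform discreteness (or just closedness of $\vL'$) does guarantee that a bump of small enough support about $x_0\in\vL\setminus\vL'$ yields $N_f(\vL)=f(x_0)\neq 0$ while $N_f(\vL')=0$.

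There is, however, one substantive discrepancy: you apply Stone--Weierstrass to the \emph{unital} algebra obtained by adjoining the constants, which proves density of $\mathcal{A}+\CC\cdot 1$ rather than of the algebra $\mathcal{A}$ generated by the $N_f$ alone --- and the lemma as stated (and as used later) asserts the latter. This is not cosmetic for this paper: in the proof of part (i) of Thm.~\ref{momentsToCorrelations} the constant function $1$ is approximated by elements of the algebra generated by the $N_h$, which is precisely the content your version does not deliver. The paper avoids this by invoking the non-unital form of Stone--Weierstrass: a conjugation-closed subalgebra of $C(X)$ that separates points \emph{and vanishes at no point} is dense. The extra hypothesis is easy here --- for each $\vL\in X$ pick $f\in S(\RR^d)$ positive on a large ball so that $N_f(\vL)\neq 0$ (this is where one tacitly uses that the point sets in $X$ are nonempty). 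If you replace ``adjoin constants'' by this nonvanishing check, your proof coincides with the paper's and yields the full statement.
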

\begin{proof} (See \cite{BL} for a similar argument.) Obviously, the
  algebra in question separates points, is closed under taking complex
  conjugates and to each $\varLambda\in X$, there exists an $f\in
  S(\RR^d)$ with $N_f (\varLambda)\neq 0$. Thus, the first statement
  follows by the Stone-Weierstrass theorem (see \cite{RS} for the
  version used here). The last statement is then clear.
\end{proof}

Define an action $U$ of $\RR^d$ on
$L^2(\RR^d, \dif)$ by
\[ (U_t f)(x) = e^{-2 \pi i t \cdot x}f(x)\]
for all $t,x \in \RR^d$, $f \in  L^2(\RR^d, \dif)$.

\begin{theorem}\label{basic}
Let $(X,\mu)$ satisfy {\rm \bf A(i)} and {\rm \bf  A(ii)}. Then the following holds.
\begin{itemize}
\item[(i)]  The set $\{\widehat{f} : f\in S(\RR^d)\}$ is dense in $L^2(\RR^d, \widehat\gamma)$ and  there is a unique isometric embedding
\[\theta : L^2(\RR^d, \widehat\gamma) \longrightarrow L^2(X,\mu)\, \]
with $\theta(\widehat{f}) = N_f $ for all $f\in S(\RR^d)$. This embedding
intertwines $U$ and $T$.
\item[(ii)] $\dif$ is a pure point measure if and only if $(X,\mu)$ is pure
  point.
\item[(iii)] For $k \in \RR^d$ the equation
 $$\theta(\charF_{k}) = \lim_{R\to \infty} \frac{1}{\vol C_R} \sum_{x\in \vL \cap C_R}
e^{2 \pi i k \cdot x}$$
holds, where the limit is meant in the $L^2$ sense. Moreover,
$ \dif(\{k\}) \ne 0$ if and only if $\theta(\charF_{k})\neq 0$. In this case,
$\theta(\charF_{k})$
is an eigenfunction of $(X,\mu)$ for the eigenvalue $k$.
\end{itemize}
\end{theorem}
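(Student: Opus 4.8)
The plan is to derive all three parts from a single computation of the inner product $\langle N_f,N_g\rangle$ in $L^2(X,\mu)$ (the Dworkin identity) together with the spectral structure of $U$. First I would expand $N_f\overline{N_g}$ as a double sum over $\vL$ and apply the Birkhoff pointwise ergodic theorem for the $\RR^d$-action (legitimate by \textbf{A(ii)}, with integrability guaranteed by uniform discreteness), replacing the space average $\int_X N_f\overline{N_g}\,d\mu$ by a volume average of $\sum_{x,y\in\vL}f(x-t)\overline{g(y-t)}$ over $t\in C_R$. Carrying out the $t$-integration turns each pair into a correlation $(f\star\overline g)(x-y)$, and after controlling the window-boundary contributions (this is where uniform discreteness is essential, bounding the number of points per unit volume) the expression converges by Theorem~\ref{diffraction} to $\gamma(\varphi)$ for an explicit $\varphi$ assembled from $f$ and $\overline g$. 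Passing to Fourier space, $\gamma(\varphi)=\dif(\widehat\varphi)$ with $\widehat\varphi=\widehat f\,\overline{\widehat g}$, so $\langle N_f,N_g\rangle=\int_{\RR^d}\widehat f\,\overline{\widehat g}\,d\dif=\langle\widehat f,\widehat g\rangle_{L^2(\RR^d,\dif)}$. Since the Fourier transform is a bijection of $S(\RR^d)$ and Schwartz functions are dense in the $L^2$-space of the Radon measure $\dif$, the set $\{\widehat f\}$ is dense in $L^2(\RR^d,\dif)$, and the isometry $\widehat f\mapsto N_f$ extends uniquely to the embedding $\theta$. The intertwining I would verify on this dense set, where both $\theta U_t$ and $T_t\theta$ send $\widehat f$ to $N_{f(\cdot-t)}$, giving (i).

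Part (iii) I would read off from the same machinery. Writing the cut-off sums as $\tfrac1{\vol C_R}N_{g_R}$ with $g_R(x)=e^{2\pi i k\cdot x}\charF_{C_R}(x)$, the inner-product identity extends to these (non-Schwartz) $g_R$ by approximation, so $\tfrac1{\vol C_R}N_{g_R}$ corresponds under the isometry to $\tfrac1{\vol C_R}\widehat{g_R}\in L^2(\RR^d,\dif)$; one then shows $\tfrac1{\vol C_R}\widehat{g_R}\to\charF_{k}$ in $L^2(\RR^d,\dif)$, so the isometry forces $\tfrac1{\vol C_R}N_{g_R}\to\theta(\charF_k)$ in $L^2(X,\mu)$, the existence of the limit being exactly the $L^2$ Bombieri--Taylor statement of \cite{XR}. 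The equivalence $\dif(\{k\})\neq0\iff\theta(\charF_k)\neq0$ is then immediate from isometry, since $\|\charF_k\|^2_{L^2(\dif)}=\dif(\{k\})$. Finally $U_t\charF_k=e^{-2\pi i k\cdot t}\charF_k$, so the intertwining yields $T_t\theta(\charF_k)=e^{-2\pi i k\cdot t}\theta(\charF_k)$, identifying $\theta(\charF_k)$ as an eigenfunction for $k$.

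For part (ii), the operator $U$ is multiplication by the characters $x\mapsto e^{-2\pi i t\cdot x}$, so its maximal spectral type is $\dif$, and $U$ has pure point spectrum precisely when $\dif$ is a pure point measure. If $(X,\mu)$ is pure point then $T$ has pure point spectrum; the restriction of a pure point unitary representation to any closed invariant subspace is again pure point, and $\theta$ identifies $L^2(\RR^d,\dif)$ with such a subspace, so $U$, and hence $\dif$, is pure point. The converse is the substantive direction: when $\dif$ is pure point the normalized $\charF_k$ form a Hilbert basis of $L^2(\RR^d,\dif)$, so every $N_f=\theta(\widehat f)$ lies in the closed span $\cA$ of the eigenfunctions $\theta(\charF_k)$. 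These eigenfunctions are bounded (again by uniform discreteness), and a product of two eigenfunctions is an eigenfunction for the sum of the eigenvalues, so $\cA$ is a closed subalgebra of $L^2(X,\mu)$ containing every $N_f$; it therefore contains the algebra they generate, which is dense by Lemma~\ref{SW}. Hence the eigenfunctions span $L^2(X,\mu)$ and $(X,\mu)$ is pure point.

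The genuine obstacles sit in the first paragraph and in the converse just described. In the Dworkin identity the delicate point is the interchange of the ergodic limit with the infinite double sum and the control of the window-boundary terms, both of which rely on uniform discreteness rather than on any softer compactness; I expect this to absorb most of the effort. In part (ii) the hard direction is showing that the eigenfunctions coming from $\dif$ exhaust $L^2(X,\mu)$: the embedding $\theta$ by itself only reaches a subspace, and it is the algebra structure (products of eigenfunctions together with the density in Lemma~\ref{SW}) that closes the gap, which is precisely the mechanism allowing finitely generated spectral data to be dynamically complete.
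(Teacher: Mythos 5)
Your overall route is the standard one and, for the only part of the theorem that this paper actually proves itself, it coincides with the paper's mechanism: the paper defers (i) and (iii) to \cite{XR} (Remark (a) after the theorem), and its own contribution is the substantive direction of (ii) -- diffraction pure point implies dynamics pure point -- proved later as Corollary \ref{pp} by exactly the device you describe: expand each $N_{h}$ in the eigenfunctions $f_k=\theta(\charF_k)$, observe that products $f_{k_1}\cdots f_{k_n}$ are eigenfunctions, and invoke the Stone--Weierstrass density of the algebra generated by the $N_h$ (Lemma \ref{SW}). Your sketch of the Dworkin identity for (i) and the sinc-approximation for (iii) is consistent with what \cite{XR} does, and you correctly flag that the $L^2$ convergence in (iii) is the Bombieri--Taylor statement rather than a formality.

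One step in your argument for (ii) is wrong as written, and it touches the central point of the paper. You define $\cA$ as the closed \emph{linear span} of the $\theta(\charF_k)$, $k\in\Bragg$, and assert it is a closed subalgebra because products of eigenfunctions are eigenfunctions. But $f_k f_l$ is an eigenfunction for $k+l$, and if $k+l\in\ext$ (an extinction) then $f_kf_l$ is orthogonal to every $f_m$, $m\in\Bragg$, hence \emph{not} in $\cA$. So $\cA$ is not closed under multiplication whenever $\ext\neq\emptyset$; had your claim been correct it would force $\theta$ to be surjective, which the paper emphasizes is false in general. The repair is the one the paper makes in Corollary \ref{pp}: replace $\cA$ by the closed span of \emph{all finite products} $f_{k_1}\cdots f_{k_n}$. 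These are still eigenfunctions (for eigenvalues $k_1+\cdots+k_n$, possibly extinctions), Lemma \ref{formulan} shows every $N_{h_1}\cdots N_{h_n}$ lies in this span, and Lemma \ref{SW} then gives totality of eigenfunctions in $L^2(X,\mu)$. Your closing paragraph shows you know this is the right picture -- the image of $\theta$ is only a subspace and the products close the gap -- but the formal argument must be run on the span of products, not on the image of $\theta$. (Minor: boundedness of $f_k$ follows from ergodicity, since $|f_k|$ is a.e.\ constant, not from uniform discreteness.)
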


\begin{remark} (a)  The theorem  may be found in the form stated here in
  \cite{XR}.  It has a long history that includes \cite{Dworkin, LMS, Gouere,
    BL, Lenz}. In fact, (i) and (iii) are shown in \cite{XR}, see
  \cite{Lenz} for extensions of (iii) as well. The statement (ii) is proven in various
  levels of generality in \cite{Dworkin, LMS,Gouere, BL, LS}.  We will give an independent proof below in Corollary \ref{pp}.

(b)  Under $\theta$ eigenfunctions go to eigenfunctions. However, the formula of
 part (i) is not applicable in (iii): the function $\charF_{\{ k\}}$, which
 takes the value $1$ at $k$ and zero everywhere else, is not even remotely in
 $S(\RR^d)$. The limit stated here appears only after approximation by
 functions from $S(\RR^d)$.

(c) As $\theta$ is an isometry,  $ \dif(\{k\}) \ne 0$ is obviously equivalent to  $\theta(\charF_{k})\neq 0$.
\end{remark}

The functions $\theta(\charF_{k})$ for $k$ with $\dif(\{k\}) \ne 0$ appearing
in (iii) of the previous theorem will play a crucial role in our
considerations. We define
$$ f_k := \theta(\charF_{k})\;\: \mbox{whenever} \;\: \dif(\{k\}) \ne 0.$$

\begin{definition} Let
\begin{eqnarray*}
 \ev &:=& \{k \in \RR^d \,:\, k \; \mbox{is an eigenvalue of} \; (X,\mu) \}\\
 \Bragg &:=& \{k \in \RR^d\,:\, \dif(\{k\}) \ne 0 \} \,.
 \end{eqnarray*}
 $\ev$ is the {\bf dynamical spectrum} of $(X,\mu)$ and $\Bragg$ is its {\bf Bragg spectrum}.
\end{definition}

We note that the Bragg spectrum is sometimes also known as Fourier-Bohr spectrum.  For convenience of notation, we will write $\dif(k)$ instead of $\dif (\{k\})$ in the remaining part of the paper. 

\smallskip

Let us briefly discuss the structure of $\ev$ and $\Bragg$ and their
relationship. It is well known that $\ev$ is a \emph{subgroup} of $\RR^d$. In fact, the product of
eigenfunctions is again an eigenfunction to the sum of the respective
eigenvalues and the complex conjugate of an eigenfunction is an eigenfunction
to the inverse of the corresponding eigenvalue.

As shown in  (iii) of Thm.~ \ref{basic}  any Bragg peak $k$ comes with a
canoncial eigenfunction $f_k   = \theta(\charF_k)$.  In particular,
we have the inclusion
\[ \Bragg \subset \ev. \]
In general, this inclusion is strict  even in the pure point case. The
limit formula for $f_k$ in Thm.~ \ref{basic} shows that
\begin{equation}\label{reflection}
 f_{-k} = \overline{f_k} \,
 \end{equation}
 and that
$$ f_0 \neq 0.$$
Thus, the statement on the eigenfunctions in (iii) of Thm.~ \ref{basic}
shows that $\Bragg$ satisfies
$$ 0\in \Bragg ,\:\;\mbox{and}\;\: \Bragg = - \Bragg.$$
It is a
fundamental fact that the canonical eigenfunction $f_k$ of the Bragg peak $k$
can be related to the intensity of the Bragg peak. More precisely, note that
 due to the ergodicity the modulus of  any eigenfunction is constant
 $\mu$-almost everywhere. Thus,
\begin{equation} \label{eigenLength}
|f_k| =  \langle  f_k,f_k \rangle_{L^2(X,\mu)}^{1/2} =
\langle  \charF_k,\charF_k \rangle_{L^2(\RR^d,\dif)} ^{1/2} = \dif(k)^{1/2} \,,
\end{equation}
where the first equality holds $\mu$-almost everywhere.

\smallskip

Our final assumption is:

\begin{itemize}
\item[\bf{A(iii)}] $\widehat \gamma$ is a pure point measure.
\end{itemize}

\begin{definition} A pair $(X,\mu)$ satisfying axioms {\bf A(i),(ii),(iii)}
is called a {\bf pure point ergodic uniformly discrete point process}.
\end{definition}

In this pure point case, $\Bragg$ generates $\ev$ as a group, as shown in
\cite{BL}. We will give an independent proof based on Thm.~ \ref{basic} in
Corollary \ref{pp} below.

\begin{definition} The set $\ext := \ev \backslash \Bragg$
is called the set of {\bf extinctions} of $(X,\mu)$.
\end{definition}

\subsection{Correlations and moments}
Let $(X,\mu)$ satisfy {\bf A(i)} and {\bf A(ii)}. For $n=1,2, \dots$, the
$n$th-{\bf moment} of $\mu$ is the measure $\mu_n$ defined on $\RR^d \times
\dots \times \RR^d$ ($n$-factors) by \[\mu_n(h_1, \dots , h_n) = \int_X N_{h_1}
\dots N_{h_n} \, d\mu \, \] for all $h_1, \dots, h_n \in S(\RR^d)$. It is
clear that these moment measures are invariant under simultaneous translation
of all the variables. There is a standard procedure of eliminating this
translation invariance resulting in the \emph{reduced} moments $\mu_n^{{\rm
    red}}$ which are in one less variable. This works as follows: let $g,h_1,
\dots, h_{n-1} \in S(\RR^d)$ and let $G$ be the function on $(\RR^d)^n$ whose
value on $(x,y_1, \dots, y_{n-1})$ is $g(x)(T_xh_1)(y_1) \cdots (T_x h_{n-1})(y_n)$.
Then
\[\mu_n(G) = \mu_n(g (T_xh_1) \dots (T_xh_{n-1}))
= \int_{\RR^d}g(x) dx \, \mu^{\rm red}_{n}(h_1,\dots, h_{n-1}) \, , \]
and this equation defines the reduced moments.

Most importantly for our purposes,
these reduced moments are also the correlations -- see \cite{DV-J}, Sec.~12.2  and
\cite{XR}, Sec.~7 \footnote{The reduced moments are also connected directly to Palm measures, a direction that is more fully explored in  \cite{Gouere, XR}.}. More precisely, the following holds.

\begin{theorem} \label{momentsToCorrelations}
Let $(X,\mu)$ satisfy {\bf A(i)}
and {\bf A(ii)}.
\begin{itemize}
\item[(i)] For each $m\in \NN$, $\mu$ is uniquely determined by its moments
  $\mu_n$, $n\geq m$.
\item[(ii)] $\gamma^{(n)} = \mu_n^{{\rm red}}$, $n= 2,3, \dots$.
\item[(iii)] For $n\geq 2$, $\mu_n$ is uniquely determined by  $\mu_n^{{\rm red}}$.
\item[(iv)] The measure $\mu$ is uniquely determined by $\gamma^{(n)}$, $n= 2,3, \dots$.
\end{itemize}
\end{theorem}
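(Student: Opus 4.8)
The plan is to obtain the final statement, part (iv), as a short combination of (i)–(iii), so the real work lies in the latter three. Granting them, (iv) is immediate: by (ii) the $n$-point correlation is the reduced moment $\mu_n^{\rm red}$, by (iii) this reduced moment determines the full moment $\mu_n$ for each $n\ge 2$, and by (i) applied with $m=2$ the family $\{\mu_n : n\ge 2\}$ determines $\mu$. Thus the correlations $\gamma^{(n)}$, $n\ge 2$, determine $\mu$.

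For (i) I would first isolate the substantive content as a density statement. Fix $m$ and let $V_m\subseteq C(X)$ be the linear span of the products $N_{h_1}\cdots N_{h_n}$ with $n\ge m$ and $h_i\in S(\RR^d)$; uniform discreteness makes each $N_h$ bounded, so $V_m\subset C(X)$. This is a self-adjoint subalgebra, being closed under multiplication (since $n+n'\ge m$) and under conjugation (since $\overline{N_h}=N_{\bar h}$). The claim is that $V_m$ is dense in $C(X)$. Granting this, the functional $Q\mapsto\int_X Q\,d\mu$ on $V_m$ is bounded by $\|Q\|_\infty$ because $\mu$ is a probability measure, so any two measures agreeing on $V_m$ agree on all of $C(X)$ and therefore coincide; this is exactly the asserted uniqueness. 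The crux is then point separation for $V_m$: products of length at least $m$ must still tell points apart. Here is the device I would use. Since every $\vL\in X$ is nonempty (Lemma \ref{SW}), for each $\vL$ one can choose a nonnegative $g$ with $N_g(\vL)>0$; by continuity $N_g>0$ on a neighbourhood, and compactness of $X$ produces finitely many such $g$'s whose sum $G\in S(\RR^d)$ satisfies $N_G\ge c>0$ everywhere. Then $N_G^m$ and the $N_f N_G^{m-1}$ lie in $V_m$; if $N_G^m$ fails to separate $\vL,\vL'$ then positivity forces $N_G(\vL)=N_G(\vL')$, and then some $N_fN_G^{m-1}$ separates them because the $N_f$ already do (Lemma \ref{SW}). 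Together with vanishing nowhere (as $N_G^m\ge c^m>0$) and self-adjointness, the non-unital Stone--Weierstrass theorem gives density.

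Part (ii) I would treat as a bookkeeping identification powered by the pointwise ergodic theorem. By definition $\gamma^{(n)}_\vL(F)$ is a cube average over $x\in\vL\cap C_R$ of the translates of the fixed function $\vL\mapsto\sum_{y_1,\dots,y_{n-1}\in\vL}F(-x+y_1,\dots,-x+y_{n-1})$; the ergodic theorem underlying Theorem \ref{diffraction} identifies the almost sure limit with the corresponding space average against $\mu$. Unwinding the definition of the reduced moment on the shear-product test functions $G(x,y_1,\dots,y_{n-1})=g(x)(T_xh_1)(y_1)\cdots(T_xh_{n-1})(y_{n-1})$ then matches this space average, term by term, with $\mu_n^{\rm red}$, exactly as in \cite{DV-J}, Sec.~12.2 and \cite{XR}, Sec.~7.

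Finally, for (iii) I would argue that $\mu_n$ is recovered from its de-translated version. The measure $\mu_n$ is invariant under the simultaneous (diagonal) translation of all $n$ variables, since $N_{T_t h}$ is the $T$-translate of $N_h$ and $\mu$ is $T$-invariant. Passing to difference coordinates $(x,\,y_1-x,\,\dots,\,y_{n-1}-x)$ converts this into invariance in the single variable $x$, so by the structure of translation-invariant tempered distributions $\mu_n$ factors as Lebesgue measure in $x$ tensored with a distribution in the difference variables; by its defining relation evaluated on the shear-product test functions above, that transverse factor is precisely $\mu_n^{\rm red}$, whence $\mu_n^{\rm red}$ determines $\mu_n$. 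The main obstacle in the whole argument is the point-separation step in (i): it is exactly what lets one reduce to moments of order at least $m$ (rather than all orders), and hence makes the $m=2$ reduction in (iv) legitimate; it is resolved by the uniformly positive $N_G$ constructed by compactness.
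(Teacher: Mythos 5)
Your proposal is correct, and for parts (ii), (iii) and (iv) it matches the paper: (iv) is obtained by exactly the same combination of (i)--(iii), while for (ii) and (iii) the paper simply cites Daley--Vere-Jones (Prop.~12.2.V and Sec.~10.4) and your sketches (ergodic averaging for (ii), diagonal-translation invariance plus the shear-product test functions for (iii)) are consistent with what those references do. The genuine divergence is in (i). The paper's argument is a reduction: for $m=1$ the claim is immediate from Lemma \ref{SW}, and for general $m$ one notes that the constant function $1$ can be approximated by elements of the algebra generated by the $N_h$, so that $N_{h_1}\cdots N_{h_m}=N_{h_1}\cdots N_{h_m}\cdot 1$ is a limit of linear combinations of products of length $>m$; hence the moments of order $>m$ determine $\mu_m$, and one inducts downward to reach $m=1$. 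You instead prove directly that the span $V_m$ of products of length $\geq m$ is a dense self-adjoint subalgebra of $C(X)$, manufacturing point separation and nowhere-vanishing inside $V_m$ by building a uniformly positive $N_G$ via a compactness argument. Both routes work and both rest on the same foundation (Lemma \ref{SW} plus Stone--Weierstrass); the paper's version is shorter and yields the slightly sharper statement that each lower moment is an explicit limit of higher ones, whereas yours is a single self-contained Stone--Weierstrass application at the cost of the auxiliary construction of $G$. One caveat applies equally to both: the separation/non-vanishing step presupposes that every $\vL\in X$ is nonempty (if $\emptyset\in X$ then every $N_f$ vanishes there and neither algebra can be dense in all of $C(X)$); this hypothesis is already implicit in the paper's proof of Lemma \ref{SW}, so your argument is not weaker than the paper's on this point, but it is worth being aware that you are inheriting it.
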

\begin{proof} (i) For $m=1$ the statement follows immediately from Lemma
  \ref{SW}.  Now, it suffices to show that the $\mu_n$, $n> m$, determine
  $\mu_m$. By Lemma \ref{SW}, again, the constant function
  $1$ can be approximated by elements of the algebra generated by the $N_h$,
  $h\in S(\RR^d)$. Thus, a
  product
$$ N_{h_1} \cdots N_{h_m} =  N_{h_1} \dots N_{h_m} . 1$$
can be approximated by  linear combinations of products of more than $m$
functions in $S(\RR^d)$. Thus, $\mu_n$, $n>m$, determine $\mu_m$.

The proof of (ii) can be found in \cite{DV-J}, Prop.~12.2.V.
The proof of (iii) can be found in \cite{DV-J}, Sec.~10.4.
Finally, (iv) is a direct consequence of (i), (ii) and (iii).
\end{proof}

\begin{remark} The proof of (i) in the previous theorem does not require
  ergodicity. It only uses that the functions $N_f$, $f\in S (\RR^d)$, are
  bounded and continuous on the compact $X$.
\end{remark}

The point of the previous theorem  is that rather than correlations, we may instead look at
corresponding moments. Our question becomes that of asking how many moment
measures are required to pin down $\mu$ uniquely.

\section{Eigenfunctions and cycles}

\subsection{The cycle function of $(X,\mu)$} \label{cyclefunction}
Let $(X,\mu)$ be a pure point ergodic uniformly discrete point process. Thus,
{\bf A(i),(ii),(iii)} are valid.

\bigskip

The elements of $L^2(\RR^d , \dif)$ are all the sums
\[ \sum_{k\in \Bragg} x_k \charF_{k}, \; \mbox{where} \; \sum_{k\in \Bragg}
|x_k|^2
\dif(k) < \infty \, .\]
As  described in (i) and (iii)  of Thm.~ \ref{basic} the map $\theta$ exhibits very different behaviour on functions
$h \in S(\RR^d)$ and on functions $\charF_k$. This leads to two very different ways
in which to write $\overline{\theta(L^2(\RR^d, \dif))}$.  More precisely, both
the linear span of the set of $ \charF_{k}$, $k\in \Bragg$, and the set $\widehat{h}$, $h\in
S(\RR^d)$, are dense in $L^2(\RR^d, \dif)$. As $\theta$ is an isometry, this gives
\[ \theta(L^2(\RR^d, \dif)) = \overline{\{N_h \,:\, h \in S(\RR^d)
  \}} =  \overline{{\mbox{linear span}}\, \{ f_k \,:\, k \in \Bragg \}} \, .\]
Our next aim is to obtain similar statements for products. This requires some care as we will have to deal with products of infinite sums. The corresponding details are given in the next two lemmas.

We will use repeatedly the elementary fact that
$ \{g_m h\}$ converges to $g h$ in $L^2$ whenever $\{g_m\}$ is a sequence  converging to $g$ in $L^2
$ and $h$ is a bounded function.

\begin{lemma}\label{formulan} Let $n\in \NN$  and  $h_1,\ldots,h_n \in  S(\RR^d)$ be given. Then,
  $$N_{ h_1} \dots N_{h_n} = \sum_{k_1\in
    \Bragg}\ldots\sum_{k_n\in \Bragg} \widehat{h_1} (k_1)\dots \widehat{h_n}
  (k_n) f_{k_1}\dots f_{k_n},$$
where the sums exist in $L^2$ and are taken one after the other. In particular,
$$
  \int N_{ h_1} \dots N_{ h_n} d\mu = \sum_{k_1\in
    \Bragg}\ldots\sum_{k_n\in \Bragg} \widehat{h_1} (k_1)\dots \widehat{h_n}
  (k_n) \mu (f_{k_1}\dots f_{k_n}).$$
\end{lemma}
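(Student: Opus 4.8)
The plan is to argue by induction on $n$, peeling off the leading factor $N_{h_1}$ and converting it into its eigenfunction expansion, while carrying a bounded prefactor through the induction so that every multiplication step reduces to the safe situation ``bounded function times an $L^2$-convergent sum.''

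First I would record the base case $n=1$. Since $\dif$ is pure point with support $\Bragg$ (assumption \textbf{A(iii)}), the indicators $\charF_k$, $k\in\Bragg$, are pairwise orthogonal with $\langle \charF_k,\charF_k\rangle = \dif(k)$ and have dense span, so $\widehat{h} = \sum_{k\in\Bragg}\widehat{h}(k)\,\charF_k$ converges in $L^2(\RR^d,\dif)$ for every $h\in S(\RR^d)$. Applying the isometry $\theta$ of Thm.~\ref{basic}(i), which is continuous and hence commutes with $L^2$-limits, and using $\theta(\charF_k)=f_k$, gives $N_h = \sum_{k\in\Bragg}\widehat{h}(k)\,f_k$ in $L^2(X,\mu)$. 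Two boundedness facts make the rest of the argument run: each $N_h$ is a bounded continuous function on the compact space $X$, and each eigenfunction satisfies $|f_k| = \dif(k)^{1/2}$ $\mu$-almost everywhere by \eqref{eigenLength}, hence is bounded.

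I would then prove the slightly stronger statement that for every bounded measurable function $B$ on $X$,
\[
B\,N_{h_1}\cdots N_{h_n} \;=\; \sum_{k_1\in\Bragg}\cdots\sum_{k_n\in\Bragg}\widehat{h_1}(k_1)\cdots\widehat{h_n}(k_n)\,B\,f_{k_1}\cdots f_{k_n},
\]
the sums being iterated $L^2$-limits. For the inductive step I expand the leading factor $N_{h_1}=\sum_{k_1}\widehat{h_1}(k_1)f_{k_1}$ and multiply by the bounded function $B\,N_{h_2}\cdots N_{h_n}$; the elementary fact stated just before the lemma (multiplication by a bounded function is continuous from $L^2$ to $L^2$) lets me bring this product inside the sum over $k_1$. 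Each resulting term carries the bounded prefactor $B\,f_{k_1}$, to which the induction hypothesis for $h_2,\dots,h_n$ applies directly, producing the remaining $n-1$ nested sums in the stated order. Setting $B=1$ yields the first assertion. For the integral identity I would use that $\mu$ is a probability measure, so $1\in L^2(X,\mu)$ and the functional $g\mapsto\int_X g\,d\mu = \langle g,1\rangle$ is continuous on $L^2(X,\mu)$; since each iterated sum converges in $L^2$, this functional passes through all of them, replacing $f_{k_1}\cdots f_{k_n}$ by $\mu(f_{k_1}\cdots f_{k_n})$ termwise.

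The main obstacle is the bookkeeping of the convergence of the \emph{iterated} infinite sums rather than a single sum: one cannot simply expand all $n$ factors at once and invoke a product formula, because these are genuinely infinite $L^2$-sums and the termwise product of two such sums need not converge. Threading the bounded prefactor $B$ (which absorbs the already-expanded eigenfunctions $f_{k_1}$) through the induction is precisely the device that keeps every step within the regime where the elementary continuity fact applies, and it is the only place where convergence is actually invoked.
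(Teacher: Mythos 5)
Your proof is correct and follows essentially the same route as the paper: expand the leading factor $N_{h_1}=\sum_{k_1}\widehat{h_1}(k_1)f_{k_1}$ via Thm.~\ref{basic}(i), use the elementary fact that multiplication by a bounded function preserves $L^2$-convergence to pull the remaining bounded product inside the sum, and induct; the paper leaves the induction implicit, whereas you make it airtight by strengthening the hypothesis with the bounded prefactor $B$ that absorbs the already-expanded eigenfunctions. The integral identity is likewise obtained in both cases from continuity of $g\mapsto\langle g,1\rangle$ on $L^2(X,\mu)$.
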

\begin{proof} By (i) of Thm.~ \ref{basic} we have
$$ N_{h_j} = \sum_{k_j\in \Bragg} \widehat{h_j} (k_j) f_{k_j}$$
for each $j$.  Therefore,
$$ N_{ h_1} \dots N_{h_n} = \left( \sum_{k_1\in \Bragg} \widehat{h_1} (k_1) f_{k_1} \right) N_{h_2} \dots N_{h_n} = \sum_{k_1\in \Bragg} \widehat{h_1} (k_1) f_{k_1} N_{h_2} \dots N_{h_n}$$
and the first statement follows by induction.  As $\mu$ is a finite measure, the last statement then follows easily.
\end{proof}

As a corollary  of this lemma we obtain a new proof of the following known fact.

\begin{coro} \label{pp} $T$ has pure point spectrum, i.e. there exists a basis of $L^2 (X,\mu)$ consisting of eigenfunctions.  Moreover, any eigenfunction is a finite  product of  functions $f_k$, $k\in \Bragg$, and any eigenvalue is a sum of $k\in \Bragg$.
\end{coro}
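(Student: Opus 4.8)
The plan is to deduce completeness of the eigenfunctions from Lemma~\ref{SW} and the product formula of Lemma~\ref{formulan}, and then to extract the arithmetic of $\ev$ from the fact that the relevant products are eigenfunctions with explicitly known eigenvalues. The first ingredient I would record is that a product of eigenfunctions is again an eigenfunction: since $T$ acts multiplicatively, $T_t(fg)=(T_tf)(T_tg)$, so if $f$ and $g$ are eigenfunctions to eigenvalues $k$ and $\ell$ then $fg$ is an eigenfunction to $k+\ell$. Applying this inductively to the functions $f_k$, $k\in\Bragg$, of Thm.~\ref{basic}(iii), each finite product $f_{k_1}\cdots f_{k_n}$ is an eigenfunction to the eigenvalue $k_1+\cdots+k_n$. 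It is genuinely nonzero (and bounded, hence in $L^2$), because by \eqref{eigenLength} each factor has constant modulus $\dif(k_j)^{1/2}>0$ almost everywhere, so the product has constant modulus $\prod_j\dif(k_j)^{1/2}>0$.

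Let $\cV$ denote the closed linear span in $L^2(X,\mu)$ of all eigenfunctions of $T$. By Lemma~\ref{formulan} every product $N_{h_1}\cdots N_{h_n}$ is an $L^2$-limit of finite linear combinations of the eigenfunctions $f_{k_1}\cdots f_{k_n}$, and therefore lies in $\cV$; by linearity the entire algebra generated by the $N_h$, $h\in S(\RR^d)$, lies in $\cV$. Since that algebra is dense in $L^2(X,\mu)$ by Lemma~\ref{SW}, I conclude $\cV=L^2(X,\mu)$, which is precisely the assertion of pure point spectrum. An orthonormal basis of eigenfunctions is then produced in the usual way: eigenfunctions to distinct eigenvalues of the unitary $T$ are orthogonal, each eigenspace $\cHH_\lambda$ is one-dimensional by ergodicity, and selecting one unit vector from each nonzero $\cHH_\lambda$ gives the basis.

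It remains to pin down the eigenvalues and eigenfunctions. Let $\langle\Bragg\rangle$ be the set of all finite sums of elements of $\Bragg$; since $0\in\Bragg$ and $\Bragg=-\Bragg$, this is exactly the subgroup of $\RR^d$ generated by $\Bragg$. Each product $f_{k_1}\cdots f_{k_n}$ lies in the eigenspace $\cHH_\lambda$ with $\lambda=k_1+\cdots+k_n\in\langle\Bragg\rangle$, so the dense span constructed above is contained in the closed span $\cZ$ of those eigenspaces $\cHH_\lambda$ for which $\lambda\in\langle\Bragg\rangle$; hence $\cZ=L^2(X,\mu)$. Were there an eigenvalue $\lambda_0\in\ev\setminus\langle\Bragg\rangle$, its nonzero eigenspace would be orthogonal to every $\cHH_\lambda$ with $\lambda\in\langle\Bragg\rangle$, hence orthogonal to the dense subspace $\cZ$ --- impossible. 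Thus $\ev=\langle\Bragg\rangle$ and every eigenvalue is a finite sum of elements of $\Bragg$. Finally, given $\lambda\in\ev$ with a representation $\lambda=k_1+\cdots+k_n$, the nonzero product $f_{k_1}\cdots f_{k_n}$ spans the one-dimensional $\cHH_\lambda$, so every eigenfunction to $\lambda$ is a scalar multiple of this finite product (the scalar being unavoidable since the $f_k$ are fixed normalizations).

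I expect the main obstacle to be the last step: the argument that $\ev\subseteq\langle\Bragg\rangle$ rests on completeness of the eigenfunction system together with mutual orthogonality of distinct eigenspaces, so it is essential that the products $f_{k_1}\cdots f_{k_n}$ really are nonzero eigenfunctions (secured by \eqref{eigenLength}) and that their span is genuinely dense (secured by Lemmas~\ref{SW} and~\ref{formulan}). The bookkeeping of the iterated infinite $L^2$-sums is the other delicate point, but this has already been absorbed into Lemma~\ref{formulan}.
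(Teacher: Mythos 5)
Your proposal is correct and follows essentially the same route as the paper: totality of the products $f_{k_1}\cdots f_{k_n}$ in $L^2(X,\mu)$ via Lemma~\ref{formulan} combined with Lemma~\ref{SW}, together with the observation that each such product is an eigenfunction to the eigenvalue $k_1+\cdots+k_n$. You merely spell out the details the paper leaves implicit (nonvanishing of the products via \eqref{eigenLength}, and the orthogonality argument showing every eigenvalue lies in the group generated by $\Bragg$), which is sound.
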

\begin{proof}  By the previous lemma, any function of the form $N_{ h_1} \dots N_{h_n}$ with $h_j\in S(\RR^d)$ can be approximated by linear combinations of products  of the form $f_{k_1}\dots f_{k_n}$, $k_j\in \Bragg$.   Lemma \ref{SW} then gives  that functions of the form $f_{k_1}\dots f_{k_n}$, $n\in \NN$, $k_j\in \Bragg$, $j=1,\ldots,n$ are total in $L^2 (X,\mu)$. As each function of the form $f_{k_1}\dots f_{k_n}$ is an eigenfunction to the eigenvalue $k_1+\cdots + k_n$ the statement follows.
\end{proof}

\begin{lemma}\label{formulaf} Let $n\in \NN$ and  $k_1,\ldots,k_n\in \Bragg$ be given. For
  each $j=1,\ldots,n$, let $\{h_j^{(m)}\}$ be a sequence  in $S(\RR^d)$ whose
  Fourier transforms converge to  $\charF_{k_j}$ in $L^2 (\RR^d,\dif)$.
Then,  $$f_{k_1}\dots f_{k_n} = \lim_{m_1\to \infty}  \lim_{m_2\to \infty}
  \ldots \lim_{m_n\to \infty} N_{h_1^{(m_1)}} \dots N_{h_n^{(m_n)}},$$
where the limits are taken in $L^2$.
In particular,  $$\mu (f_{k_1}\dots f_{k_n}) = \lim_{m_1\to \infty}
  \lim_{m_2\to \infty} \ldots \lim_{m_n\to \infty} \mu (    N_{h_1^{(m_1)}} \dots N_{h_n^{(m_n)}})  \,.$$
\end{lemma}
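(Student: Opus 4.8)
The plan is to peel off the iterated limits one factor at a time, starting from the innermost limit $m_n\to\infty$ and working outward, replacing at each stage a single factor $N_{h_j^{(m_j)}}$ by its $L^2$-limit $f_{k_j}$ while all remaining factors stay fixed and bounded. Two ingredients drive the argument: the isometry $\theta$ of Thm.~\ref{basic}(i) and the elementary product fact recorded just before the statement. First I would note that $N_{h_j^{(m)}}=\theta(\widehat{h_j^{(m)}})$ and $f_{k_j}=\theta(\charF_{k_j})$; since $\theta$ is an isometry and $\widehat{h_j^{(m)}}\to\charF_{k_j}$ in $L^2(\RR^d,\dif)$ by hypothesis, it follows that $N_{h_j^{(m)}}\to f_{k_j}$ in $L^2(X,\mu)$ as $m\to\infty$, for each fixed $j$.

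The crucial point is that every factor that occurs is a bounded function on $X$. Each $N_h$ with $h\in S(\RR^d)$ is continuous on the compact space $X$, hence bounded; and each $f_k$, being an eigenfunction, has modulus constant $\mu$-almost everywhere by ergodicity (indeed $|f_k|=\dif(k)^{1/2}$ a.e.\ by \eqref{eigenLength}), hence is bounded as well. This boundedness is precisely what makes the product of $L^2$-convergent factors converge, through the fact that $g_m h\to gh$ in $L^2$ whenever $g_m\to g$ in $L^2$ and $h$ is bounded.

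I would then argue by downward induction on the limit index. Taking $m_n\to\infty$ first, the prefix $N_{h_1^{(m_1)}}\cdots N_{h_{n-1}^{(m_{n-1})}}$ is a fixed bounded function and $N_{h_n^{(m_n)}}\to f_{k_n}$, so the product converges in $L^2$ to $N_{h_1^{(m_1)}}\cdots N_{h_{n-1}^{(m_{n-1})}}f_{k_n}$. At the general step, after the limits $m_n,\ldots,m_{j+1}$ have been performed, the expression equals $N_{h_1^{(m_1)}}\cdots N_{h_j^{(m_j)}}\,f_{k_{j+1}}\cdots f_{k_n}$; here the product of all factors other than $N_{h_j^{(m_j)}}$ is bounded (a product of bounded $N$'s and bounded $f$'s), so letting $m_j\to\infty$ and invoking the product fact replaces that single factor by $f_{k_j}$. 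After $n$ steps one arrives at $f_{k_1}\cdots f_{k_n}$, which is the first display. The second display is then immediate: since $\mu$ is a probability measure, Cauchy--Schwarz gives $|\mu(g)-\mu(g')|\le\mu(X)^{1/2}\|g-g'\|_{L^2}$, so $L^2$-convergence passes to the integrals $\mu(\cdot)$, and the iterated limits commute with $\mu$.

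I expect the only delicate point to be the bookkeeping of the nested limits: one must check at each stage that the frozen factors form a genuinely \emph{bounded} function, not merely an $L^2$ one, so that the elementary product fact applies. The boundedness of the eigenfunctions $f_k$, which comes from ergodicity via their constant modulus, is the key enabling observation and the only place where more than formal manipulation is needed.
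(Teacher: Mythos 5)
Your proof is correct and follows essentially the same route as the paper: the paper's own (very terse) argument is precisely that all factors $N_{h_j^{(m)}}$ and $f_{k_j}$ are bounded, so the stated elementary product fact lets one replace the factors by their $L^2$-limits one at a time in the given nested order, and finiteness of $\mu$ then transfers the convergence to the integrals. Your write-up simply supplies the bookkeeping and the justification of boundedness of the $f_{k_j}$ (constant modulus $\dif(k_j)^{1/2}$ by ergodicity) that the paper leaves implicit.
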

\begin{proof}  The functions $N_{h_j^{(m)}}$ and the functions
  $f_{k_j}$, $j=1,\ldots,n$, $m\in \NN$ are bounded. Thus, the  convergence of the $ h_j^{(m)}$ easily yields  convergence of the products.  (Note that the
  limits are taken one after the other). As $\mu$ is a finite measure the last statement
  then follows easily.
\end{proof}

We will now introduce a crucial object in our studies, namely the cycle
function $a$.  Notice that, by almost sure constancy of the modulus of the functions  $f_k$,
\begin{eqnarray*}
 |f_{k_1} \dots f_{k_n}|^2 &=& \int_X f_{k_1} \dots f_{k_n}\overline{f_{k_n}} \dots
\overline{f_{k_1}} d\mu \\
&=& \langle f_{k_1}, f_{k_1}\rangle \dots  \langle f_{k_n}, f_{k_n}\rangle
= \dif(k_1) \dots \dif(k_n)\,.
\end{eqnarray*}
If $k_1 + \cdots +k_n = 0$ then $f_{k_1} \dots f_{k_n}$ is an eigenvector for
$0$, and hence is a multiple of the constant function $1_X$. Thus, in this
case,
\begin{equation} \label{defa}
f_{k_1} \dots f_{k_n} = a(k_1, \dots, k_n) \dif(k_1)^ {1/2} \dots \dif(k_n)^{1/2} \, 1_X \end{equation}
for some
\[a(k_1, \dots, k_n) \in U(1) \, .\]
Here, $U(1)$ is the unit circle i.e. the set of all complex numbers of modulus one. 
For any $k_1, \dots, k_n \in \Bragg$, we therefore obtain
\begin{eqnarray}\label{evaluationofmu}
\mu(f_{k_1} \dots f_{k_n}) &=& \int_X f_{k_1} \dots f_{k_n} \overline{1_X}\,
d\mu = \langle f_{k_1} \dots f_{k_n}, 1_X \rangle\\ \nonumber
&=& \begin{cases} a(k_1, \dots, k_n) \dif(k_1)^{1/2} \dots \dif(k_n)^{1/2} &\mbox{if} \; k_1 + \cdots +k_n = 0;\\
0 &\mbox{if} \; k_1 + \cdots +k_n \ne 0 \,,
\end{cases}
\end{eqnarray}
since eigenfunctions for different eigenvalues are orthogonal.

The next two results basically say that knowledge of the cycle function  $a$ determines the
moments and vice versa.

\begin{prop}\label{a2moments} Let $n\in \NN$ be given. Then, the  $n$-th moment of $\mu$  is
   uniquely determined by  $\dif$ and the quantities
\[a(k_1, \dots, k_n)\;\:\mbox{for} \; \:k_1, \dots , k_n \in \Bragg
\;\:\mbox{with}\: \; k_1 + \cdots + k_n = 0\,.\]
\end{prop}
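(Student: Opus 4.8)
The plan is to combine the two formulas already in hand: the expansion of a product $N_{h_1}\cdots N_{h_n}$ from Lemma \ref{formulan}, and the explicit evaluation of $\mu(f_{k_1}\cdots f_{k_n})$ recorded in \eqref{evaluationofmu}. By the very definition of the moment, $\mu_n(h_1,\dots,h_n) = \int_X N_{h_1}\cdots N_{h_n}\,d\mu$ for $h_1,\dots,h_n\in S(\RR^d)$, so it suffices to show that every such number is determined by $\dif$ together with the listed values of $a$, since $\mu_n$ is nothing but the collection of these numbers as the test functions vary.

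First I would invoke the integrated form of Lemma \ref{formulan} to write
\[
\mu_n(h_1,\dots,h_n) = \sum_{k_1\in\Bragg}\cdots\sum_{k_n\in\Bragg} \widehat{h_1}(k_1)\cdots\widehat{h_n}(k_n)\,\mu(f_{k_1}\cdots f_{k_n}),
\]
where the iterated sums exist in the sense guaranteed by that lemma. Then I would substitute \eqref{evaluationofmu}: the factor $\mu(f_{k_1}\cdots f_{k_n})$ vanishes unless $k_1+\cdots+k_n=0$, in which case it equals $a(k_1,\dots,k_n)\,\dif(k_1)^{1/2}\cdots\dif(k_n)^{1/2}$. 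All terms off the null-sum locus therefore drop out, and one is left with
\[
\mu_n(h_1,\dots,h_n) = \sum_{\substack{k_1,\dots,k_n\in\Bragg\\ k_1+\cdots+k_n=0}} \widehat{h_1}(k_1)\cdots\widehat{h_n}(k_n)\,a(k_1,\dots,k_n)\,\dif(k_1)^{1/2}\cdots\dif(k_n)^{1/2}.
\]

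The right-hand side involves only data that is fixed once $\dif$ and the null-sum values of $a$ are specified: the index set $\Bragg$ is precisely $\{k : \dif(k)\neq 0\}$ and the weights $\dif(k_j)^{1/2}$ are read off from $\dif$; the numbers $\widehat{h_1}(k_1),\dots,\widehat{h_n}(k_n)$ depend on the test functions alone and carry no information about $\mu$; and the remaining factors $a(k_1,\dots,k_n)$ are evaluated exactly on the tuples with $k_1+\cdots+k_n=0$, i.e.\ exactly where they are assumed known. Hence any two pure point ergodic uniformly discrete point processes sharing the same $\dif$ and the same null-sum values of $a$ assign identical values to $\mu_n(h_1,\dots,h_n)$ for every choice of $h_1,\dots,h_n$, which is the desired uniqueness.

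The only delicate point, and the step I expect to be the main obstacle, is the passage from a product of infinite sums to a single iterated sum that can be integrated against $\mu$ term by term. This is not an absolutely convergent rearrangement in general, so it cannot be justified by naive manipulation; however it does not need to be, since Lemma \ref{formulan} already supplies the iterated-sum expansion together with its integrated form, and the orthogonality of eigenfunctions belonging to distinct eigenvalues — which is what produces the case distinction in \eqref{evaluationofmu} — is exactly what collapses the whole expansion onto the locus $k_1+\cdots+k_n=0$. Once these two ingredients are applied in this order, no further analytic estimates are required.
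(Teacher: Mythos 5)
Your proposal is correct and follows exactly the route the paper takes: the published proof simply cites Lemma \ref{formulan} together with \eqref{evaluationofmu}, and your write-up is a faithful (and more detailed) expansion of that one-line argument. Nothing further is needed.
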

\begin{proof} This follows directly from Lemma \ref{formulan} and
  \eqref{evaluationofmu}.
\end{proof}

\begin{prop}\label{moments2a} Let $n\in \NN$ be given. Then, the values
  $a(k_1,\ldots,k_n)$ for $k_1,\dots,k_n\in \Bragg$  with $k_1+ \dots + k_n =0$ are uniquely determined
  by $\dif$ and the $n$-th moment of $\mu$.
\end{prop}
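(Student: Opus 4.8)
The plan is to run Proposition \ref{a2moments} in reverse, using Lemma \ref{formulaf} in place of Lemma \ref{formulan}. Fix $k_1,\dots,k_n\in\Bragg$ with $k_1+\cdots+k_n=0$. By \eqref{evaluationofmu} we already have the clean identity
\[\mu(f_{k_1}\cdots f_{k_n}) = a(k_1,\dots,k_n)\,\dif(k_1)^{1/2}\cdots\dif(k_n)^{1/2}.\]
Since every $k_j$ lies in $\Bragg$, each factor $\dif(k_j)^{1/2}$ is strictly positive, so $a(k_1,\dots,k_n)$ is recovered simply by dividing:
\[a(k_1,\dots,k_n) = \frac{\mu(f_{k_1}\cdots f_{k_n})}{\dif(k_1)^{1/2}\cdots\dif(k_n)^{1/2}}.\]
Thus it suffices to show that the single number $\mu(f_{k_1}\cdots f_{k_n})$ is itself determined by $\dif$ together with the $n$-th moment $\mu_n$.

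To do this I would invoke Lemma \ref{formulaf}. For each $j$, since $\dif(k_j)<\infty$ we have $\charF_{k_j}\in L^2(\RR^d,\dif)$, and by the density statement in (i) of Thm.~\ref{basic} there is a sequence $\{h_j^{(m)}\}$ in $S(\RR^d)$ whose Fourier transforms converge to $\charF_{k_j}$ in $L^2(\RR^d,\dif)$. Lemma \ref{formulaf} then yields
\[\mu(f_{k_1}\cdots f_{k_n}) = \lim_{m_1\to\infty}\cdots\lim_{m_n\to\infty}\mu\bigl(N_{h_1^{(m_1)}}\cdots N_{h_n^{(m_n)}}\bigr).\]
Now each term on the right is exactly the value $\mu_n(h_1^{(m_1)},\dots,h_n^{(m_n)})$ of the $n$-th moment measure, and so is read off directly from $\mu_n$. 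Hence, once the approximating sequences are fixed, the entire iterated limit is computable from $\mu_n$; and its value does not depend on the choice of sequences, being equal to the fixed number $\mu(f_{k_1}\cdots f_{k_n})$. Combining this with the previous display recovers $\mu(f_{k_1}\cdots f_{k_n})$, and therefore $a(k_1,\dots,k_n)$, from $\dif$ and $\mu_n$ alone.

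There is essentially no hard step here: the entire content has been pushed into Lemma \ref{formulaf}, which is precisely the $f$-to-$N$ dual of the $N$-to-$f$ expansion used for Proposition \ref{a2moments}. The only points that need a word of care are, first, that the approximating sequences exist at all, which is guaranteed exactly because $\charF_{k_j}$ genuinely lies in $L^2(\RR^d,\dif)$ (using $k_j\in\Bragg$); and second, that the iterated limit is well defined and independent of which such sequences are chosen, which is automatic since Lemma \ref{formulaf} pins its value to the fixed quantity $\mu(f_{k_1}\cdots f_{k_n})$ regardless of the approximants. Finally, the strict positivity of each $\dif(k_j)$ is what makes the concluding division legitimate, so the argument breaks down precisely where it should, namely for tuples straying outside $\Bragg$.
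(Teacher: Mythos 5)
Your argument is correct and is exactly the paper's proof, which simply cites Lemma \ref{formulaf} together with \eqref{evaluationofmu}; you have merely spelled out the details (the division by the strictly positive $\dif(k_j)^{1/2}$ and the independence of the iterated limit from the choice of approximants). No changes needed.
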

\begin{proof} This follows from Lemma \ref{formulaf} and
  \eqref{evaluationofmu}.
\end{proof}

It is convenient to introduce the Cayley graph $\Graph$ of $\ev$ with respect
to the set of generators $\Bragg$. Its vertices are the points
of $\ev$ and its edges are the pairs $\{k,l\}$ of vertices whose differences
$k-l$ lie in $\Bragg$. Since $\Bragg= -\Bragg$, we may treat the edges as undirected.
Any ${\bf k} = (k_1, \dots k_n) \in
\Bragg^n$ with $k_1+\ldots + k_n =0$ leads to a cycle $\{0, k_1, k_1+k_2,
\dots, k_1 + \dots + k_{n-1}, k_1 + \dots + k_n = 0\}$ in $\Graph$.  Thus the
function $a$ described above can be thought of as a function of the set $Z$ of
cycles of $\Graph$. We shall call it the {\bf cycle function} of $(X,\mu)$.

\subsection{Properties of the cycle function}

Given
${\bf k} , {\bf l} \in Z$, their concatenation
\[{\bf kl} :=  (k_1, \dots, k_n,l_1,\dots, l_p) \,,\] is obviously also in $Z$.

\begin{prop} The cycle function $a$ has the following properties:
\begin{itemize}
\item[(i)] for all ${\bf k},{\bf l} \in Z$, $a({\bf k})a({\bf l}) = a({\bf k}{\bf l})$;
\item[(ii)] $a(0) = 1$;
\item[(iii)]  for all ${\bf k}\in Z$, $a({\bf k}, -{\bf k}) = 1$;
\item[(iv)] $a(k_1, \dots, k_n)$ is independent of the order of the
elements $k_1, \dots, k_n$ making up the cycle;
\item[(v)] given any cycle ${\bf k} \in Z$, then any pair $\{k, -k\}$ where $k \in \Bragg$,
and also $0$
can be inserted into or deleted from the symbols of ${\bf k}$ without affecting
the value of the cycle function $a$.
\end{itemize}
\end{prop}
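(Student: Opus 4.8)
The plan is to derive all five properties directly from the defining relation \eqref{defa}, together with three elementary facts already in hand: pointwise multiplication in $L^2(X,\mu)$ is commutative, $f_{-k}=\overline{f_k}$ by \eqref{reflection}, and $f_k\overline{f_k}=\dif(k)\,1_X$ by \eqref{eigenLength}. I will also use the idempotency $1_X^2=1_X$ and the symmetry $\dif(-k)=\dif(k)$, the latter being immediate from $|f_{-k}|=|f_k|$ via \eqref{eigenLength}. The guiding principle is that for any cycle the product $f_{k_1}\cdots f_{k_n}$ equals, by \eqref{defa}, the scalar $a(k_1,\dots,k_n)$ times the \emph{strictly positive} constant $\dif(k_1)^{1/2}\cdots\dif(k_n)^{1/2}$ (strictly positive because each $k_i\in\Bragg$ forces $\dif(k_i)\neq 0$); hence whenever I evaluate such a product in two ways, the positive prefactors cancel and an identity on $a$ drops out.

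I would begin with (iv), since it is pure commutativity: reordering the factors $f_{k_1},\dots,f_{k_n}$ leaves the pointwise product unchanged, while the prefactor $\prod_i\dif(k_i)^{1/2}$ is symmetric, so \eqref{defa} forces $a$ to be symmetric in its arguments. For (i) I multiply the two instances of \eqref{defa} for ${\bf k}$ and ${\bf l}$; the right-hand sides combine via $1_X^2=1_X$ into $a({\bf k})a({\bf l})$ times the product of all $n+p$ factors $\dif^{1/2}$, which I compare with \eqref{defa} applied to the concatenation ${\bf k}{\bf l}$ to read off $a({\bf k}{\bf l})=a({\bf k})a({\bf l})$.

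Property (ii) is where the most care is needed, because \eqref{defa} alone only pins $a(0)$ to the unit circle. The extra input is that $f_0=\theta(\charF_0)$ is, by the limit formula in Thm.~\ref{basic}(iii), the positive real point density $\lim_{R\to\infty}(\vol C_R)^{-1}\#(\vL\cap C_R)$; comparing this with $f_0=a(0)\dif(0)^{1/2}1_X$ shows $a(0)$ is a positive real on $U(1)$, hence $a(0)=1$. For (iii) I pair each $f_{k_i}$ with $f_{-k_i}$: using $f_{-k_i}=\overline{f_{k_i}}$ and commutativity, the product $f_{k_1}\cdots f_{k_n}f_{-k_1}\cdots f_{-k_n}$ equals $\prod_i f_{k_i}\overline{f_{k_i}}=\prod_i\dif(k_i)\,1_X$, while \eqref{defa} writes it as $a({\bf k},-{\bf k})\prod_i\dif(k_i)\,1_X$ (using $\dif(-k_i)=\dif(k_i)$); cancelling the positive constant gives $a({\bf k},-{\bf k})=1$. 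Equivalently, one checks $a(-{\bf k})=\overline{a({\bf k})}$ from \eqref{defa} and invokes (i) to get $a({\bf k},-{\bf k})=|a({\bf k})|^2=1$.

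Finally, (v) is assembled from the previous parts. The one-element cycle $(0)$ and the two-element cycles $(k,-k)$ with $k\in\Bragg$ lie in $Z$, and the computations above give $a(0)=1$ and, by the same pairing as in (iii) with a single index, $a(k,-k)=1$. Inserting $0$ or a pair $\{k,-k\}$ into ${\bf k}$ is, after reordering by (iv) to move the inserted symbols to the end, exactly concatenation with $(0)$ or with $(k,-k)$; by (i) this multiplies $a({\bf k})$ by $1$, and deletion is the reverse. The only point worth flagging is that the pair case is \emph{not} a literal instance of (iii), since a single $k\neq 0$ is not a cycle, so it needs the one-line direct computation just described; beyond that there is no genuine obstacle, only bookkeeping.
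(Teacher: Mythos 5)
Your proof is correct and follows essentially the same route as the paper's: (i) by multiplying instances of \eqref{defa}, (ii) from the positivity of $f_0=\theta(\charF_0)$ together with $\dif(0)>0$ and $a(0)\in U(1)$, (iii) from $f_k\overline{f_k}=\dif(k)$, and (iv), (v) as bookkeeping consequences. Your remark that the pair-insertion step in (v) rests on the single-generator identity $a(k,-k)=1$ rather than on (iii) as literally stated for cycles is a fair observation, and the one-line computation you supply is exactly the one the paper itself records.
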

\begin{proof} (i) follows from \eqref{defa}.  As for (ii), note that
$f_0 = a(0)\dif(0)^{1/2}$ by \eqref{defa}. Since $f_0 = \theta(\charF_{\{0\}}) \ge 0$ from
Thm.~\ref{basic}~(iii), $\dif(0) > 0$, and $a(0) \in U(1)$, we see that $a(0) =1$.

This proves (ii).  Part (iii) follows from

\[ a(k,-k) \dif(k) = f_k f_{-k} = f_k \overline{f_{k}} = |f_k|^2 = \dif(k). \]
 Items (iv) and (v) are trivial consequences of \eqref{defa} and
parts (i),(ii), and (iii).
\end{proof}

Let $Z_n:=\{(k_1,\dots,k_n) \in Z \}$, $Z_0:= \{\emptyset\}$,
and $Z_\infty = \bigcup_{n=0}^\infty Z_n$.
We introduce an equivalence relation on $Z_\infty$ by transitive extension
of the two rules:
\begin{itemize}
\item ${\bf k} \sim {\bf l}$ if ${\bf l}$ is a permutation of the symbols of ${\bf k}$
\item ${\bf k} \sim {\bf l}$ if ${\bf l}$ can be obtained from ${\bf k}$ by inserting
or removing pairs $\{k,-k\}$, $k\in \Bragg$, or by inserting or removing $0$.
Let
\[\cZ:=Z_\infty /\sim \,.\]
\end{itemize}
It is easy to see that ${\bf k} \sim {\bf l}\,,\, {\bf k'} \sim {\bf l'} \Rightarrow
{\bf k}{\bf l} \sim {\bf k'}{\bf l'}$ , so multiplication descends from $Z_\infty$ to
$\cZ$. Indeed $\cZ$ is an abelian group under this multiplication, with
$\emptyset^\sim$ as the identity element.

Of the various $(k_1, \dots, k_n)$ that can represent a given element
$\kappa \in \cZ$ there is (at least) one of minimal length $n$. This minimal
length is denoted by $\len(\kappa)$. Define
\[\cZ_n:= \{ \kappa \in \cZ \,:\, \len(\kappa) \leq n \}\,. \]
We shall also write $\len({\bf k}) = \len(\kappa)$ when ${\bf k}^\sim = \kappa$
and call it the {\bf reduced length} of ${\bf k}$.

\begin{eqnarray*}
\cZ &=& \bigcup_{n=0}^\infty \cZ_n,\\
 \cZ_n \cZ_p &\subset& \cZ_{n+p} \quad\mbox{ for all} \; n,p \,.
\end{eqnarray*}

Evidently the cycle function $a$ determines a homomorphism, $\tilde a$,
\[ \tilde a \,:\,\cZ \longrightarrow U(1) \]
with $\tilde a(\kappa) =a(k_1, \dots,k_n)$ if $\kappa = (k_1,\dots, k_n)^\sim$.
It is clear from this that $\tilde a$ is known on $\cZ_q$ by its values on
the sets $\cZ_n$ for $n < q$ if
\[\cZ_q = \bigcup_{n+p =q, 0<n<q} \cZ_n\, \cZ_p \, .\]

\subsection{Main results}
\begin{theorem}  \label{main}
Let $(X,\mu)$ be a pure point uniformly discrete ergodic point process  and suppose that
\[\underbrace{\Bragg +  \dots + \Bragg}_{n} =\ev  \,.\]
Then  $\cZ$ is generated, as a group, by  $\cZ_{2n+1} $.
\end{theorem}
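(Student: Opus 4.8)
The plan is to prove the slightly stronger statement that every class $\kappa \in \cZ$ is a product, in the abelian group $\cZ$, of classes lying in $\cZ_{2n+1}$ (no inverses needed); this certainly implies that $\cZ_{2n+1}$ generates $\cZ$. I would argue by strong induction on the reduced length $\len(\kappa)$. If $\len(\kappa) \le 2n+1$ then $\kappa \in \cZ_{2n+1}$ and there is nothing to prove, so assume $\kappa$ has a minimal-length representative ${\bf k} = (k_1, \dots, k_m) \in Z$ with $m = \len(\kappa) \ge 2n+2$.

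The heart of the argument is a splitting construction that uses the hypothesis $\underbrace{\Bragg + \dots + \Bragg}_{n} = \ev$ to cut the long cycle along a short chord. Writing $v_i := k_1 + \dots + k_i$ for the vertices visited by the cycle in $\Graph$, I fix the interior vertex $v_{n+1}$. Since $v_{n+1} \in \ev$, the hypothesis lets me write $v_{n+1} = b_1 + \dots + b_n$ with all $b_j \in \Bragg$, i.e.\ a path of length $n$ from $0$ to $v_{n+1}$ in $\Graph$ (padding with $0 \in \Bragg$ if $v_{n+1}$ needs fewer generators). This chord splits ${\bf k}$ into
\[
C_1 := (k_1, \dots, k_{n+1}, -b_n, \dots, -b_1), \qquad
C_2 := (b_1, \dots, b_n, k_{n+2}, \dots, k_m),
\]
both of which lie in $Z$: all entries are in $\Bragg$ (using $\Bragg = -\Bragg$), and each sum telescopes to $0$. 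By construction $C_1$ has length exactly $2n+1$, so $[C_1] \in \cZ_{2n+1}$, while $C_2$ has length $m-1$, whence $\len([C_2]) \le m-1 < m$.

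It remains to check that $[C_1][C_2] = \kappa$ in $\cZ$, and this is where the choice of chord pays off. Concatenating gives
\[
C_1 C_2 = (k_1, \dots, k_{n+1}, -b_n, \dots, -b_1, b_1, \dots, b_n, k_{n+2}, \dots, k_m),
\]
and the central block $(-b_n, \dots, -b_1, b_1, \dots, b_n)$ collapses by successively deleting the adjacent pairs $\{-b_1, b_1\}, \{-b_2, b_2\}, \dots, \{-b_n, b_n\}$, each a legitimate $\{k,-k\}$ deletion in the equivalence relation defining $\cZ$. Hence $C_1 C_2 \sim {\bf k}$, so $\kappa = [C_1][C_2]$. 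Applying the induction hypothesis to $[C_2]$ expresses it as a product of elements of $\cZ_{2n+1}$, and therefore so is $\kappa$.

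The routine verifications—that $C_1, C_2 \in Z$ and that the telescoping is valid—are immediate from $\Bragg = -\Bragg$ and the deletion rule. The one point deserving care, and the real content of the proof, is the length bookkeeping: splitting at the vertex $v_{n+1}$ is precisely what makes the return-via-chord piece $C_1$ land in $\cZ_{2n+1}$ while leaving $C_2$ of length $m-1$, strictly shorter than ${\bf k}$. This is what pins the bound at $2n+1$ and lets the induction close, and the inequality $m \ge 2n+2$ supplied by the inductive hypothesis $\len(\kappa) > 2n+1$ is exactly what guarantees that $n+1$ is a genuine interior split point. I do not anticipate any deeper obstacle.
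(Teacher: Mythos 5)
Your proof is correct and follows essentially the same route as the paper: the authors also split the cycle at the vertex $k_1+\cdots+k_{n+1}$, use the hypothesis to produce a chord $(l_1,\dots,l_n)$ in $\Bragg^n$, and factor ${\bf k}$ as the product of a cycle of length $2n+1$ and one of length $N-1$, closing by induction. The only difference is cosmetic (you phrase it as strong induction on reduced length rather than as the inclusion $\cZ_N\subset\cZ_{2n+1}\cZ_{N-1}$).
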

\noindent
{\sc \bf Proof}: Let ${\bf k} = (k_1, \dots, k_N) $ be any cycle where
$N > 2n+1$. By assumption $k_1 + \dots +k_{n+1} \in \ev$ can be
written in the form $ l_1 + \dots + l_n $, where the $l_i \in \Bragg$. Then
with ${\bf j} := (k_1 , \dots , k_{n+1}) $ and ${\bf l} = (l_1 , \dots , l_n)$
\[ {\bf k} \sim \left(({\bf j})( {\bf -l})\right)\left( {\bf l}(({\bf -j}){\bf k})\right)\]
which writes it as the product of
$({\bf j})( {\bf -l})$ and ${\bf l}(({\bf -j}){\bf k})$. These have reduced lengths
at most $2n+1$ and $N-1$ respectively, and this shows that
$\cZ_{N} \subset \cZ_{2n+1}\cZ_{N-1}$.
The proof finishes by induction. \qed

\begin{theorem}  \label{main2} Let $(X,\mu)$ be a pure point uniformly discrete ergodic point process. If the dynamical spectrum is finitely generated by the diffraction spectrum, then
$\mu$ is uniquely determined by a finite number of its moments $\mu_m$. More precisely, if
\[\underbrace{\Bragg +  \dots + \Bragg}_{n} =\ev  \]
then $\mu$ is uniquely determined by  its moments $\mu_m$, $m=2,\ldots, 2n +1$.
\end{theorem}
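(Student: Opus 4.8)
The plan is to reduce the problem of determining $\mu$ to the problem of determining the cycle function $a$ on cycles of bounded reduced length, and then to invoke Theorem \ref{main} to control how much data that requires. The starting observation is Theorem \ref{momentsToCorrelations} together with Proposition \ref{a2moments} and Proposition \ref{moments2a}: knowing the moments $\mu_2, \dots, \mu_q$ is equivalent (given that $\dif$ is itself recoverable from $\mu_2$) to knowing the values $a(k_1,\dots,k_m)$ for all cycles $(k_1,\dots,k_m) \in Z_m$ with $2 \le m \le q$. So the heart of the matter is purely combinatorial: I must show that under the hypothesis $\underbrace{\Bragg + \dots + \Bragg}_{n} = \ev$, the homomorphism $\tilde a \colon \cZ \to U(1)$ is completely pinned down by its restriction to $\cZ_{2n+1}$.

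First I would argue that $\dif$ is determined by $\mu_2$, so that by Proposition \ref{a2moments} recovering $\mu$ amounts to recovering $a$ on all cycles. Since $\mu$ is determined by the full collection $\{\mu_m : m \ge 2\}$ (Theorem \ref{momentsToCorrelations}(iv)), and each $\mu_m$ is equivalent to the values of $a$ on cycles of length $m$ together with $\dif$, it suffices to show that $\tilde a$ on all of $\cZ$ is determined by its values on $\cZ_{2n+1}$, i.e.\ by the moments $\mu_2, \dots, \mu_{2n+1}$. Here is where Theorem \ref{main} enters: it asserts that $\cZ$ is generated as a group by $\cZ_{2n+1}$. Because $\tilde a$ is a \emph{group homomorphism} into $U(1)$, its value on any product of generators is the product of its values on those generators; hence once $\tilde a$ is known on a generating set, it is known everywhere on $\cZ$. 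This is the crux and it is where the homomorphism property established in the Proposition on properties of the cycle function does the real work.

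Assembling the pieces: given the moments $\mu_m$ for $m = 2, \dots, 2n+1$, I extract $\dif$ from $\mu_2$ and, via Proposition \ref{moments2a}, the values $a(\kappa)$ for every $\kappa \in \cZ_{2n+1}$; by Theorem \ref{main} these $\kappa$ generate $\cZ$, and since $\tilde a$ is a homomorphism, the values $a(\kappa)$ for \emph{all} $\kappa \in \cZ$ are thereby determined. Proposition \ref{a2moments} then recovers every moment $\mu_m$, and Theorem \ref{momentsToCorrelations}(iv) recovers $\mu$ itself.

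I expect the main obstacle to be bookkeeping rather than any deep difficulty, namely verifying cleanly that each moment $\mu_m$ is genuinely equivalent to the data $(\dif, a|_{Z_m})$ in both directions. The forward direction (moments determine $a$) is Proposition \ref{moments2a}; the reverse (that $\dif$ plus the cycle values reconstruct the moment) is Proposition \ref{a2moments}; and the passage from ``$a$ on a generating set'' to ``$a$ everywhere'' must be justified using precisely that $\tilde a$ is well-defined on the quotient $\cZ$ and respects the relations defining $\sim$, so that no hidden inconsistency arises when a long cycle is split into short generators in two different ways. Once the homomorphism property is invoked correctly, the argument closes by a short induction exactly mirroring the inductive step of Theorem \ref{main}.
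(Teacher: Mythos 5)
Your proposal is correct and follows essentially the same route as the paper: second moment gives $\dif$, Propositions \ref{moments2a} and \ref{a2moments} translate between moments and the cycle function, Theorem \ref{main} plus the homomorphism property of $\tilde a$ extends knowledge of $a$ from $\cZ_{2n+1}$ to all of $\cZ$, and Theorem \ref{momentsToCorrelations} recovers $\mu$. The paper's proof is just a terser version of exactly this chain of reductions.
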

\begin{proof}
By (ii) of Thm.~ \ref{momentsToCorrelations}   the second moment determines the autocorrelation
$\gamma$. By Proposition \ref{moments2a} the moments  $\mu_m$, $m=2,\ldots,  2n +1$, then
determine the function $a$ on $\cZ_m$ for $m=2,\ldots, 2n +1$. The
previous theorem gives that the function $a$ is then completely determined. Thus,  by
Proposition \ref{a2moments}, all moments $\mu_m$, $m\geq 2$,  are then determined. Now, the theorem
follows  by (i) of Thm.~ \ref{momentsToCorrelations}.
\end{proof}

\begin{coro} If there are no extinctions, so $\Bragg =\ev$, then
$\mu$ is determined by its second and third correlations.
\end{coro}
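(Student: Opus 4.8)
The plan is to recognise this corollary as nothing more than the case $n=1$ of Theorem~\ref{main2}, followed by a translation from moments to correlations via Theorem~\ref{momentsToCorrelations}. Since everything substantive has already been established, the work is entirely one of bookkeeping and of checking that no information is lost in the passage between the two languages.

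First I would observe that the hypothesis ``no extinctions'', i.e. $\Bragg = \ev$, is exactly the assumption
\[\underbrace{\Bragg + \dots + \Bragg}_{n} = \ev\]
of Theorem~\ref{main2} in the case $n=1$: a single copy of the Bragg spectrum already exhausts the dynamical spectrum, so no further sums are needed. (That $\ev$ is generated as a group by $\Bragg$ in general is Corollary~\ref{pp}; here the stronger equality holds.) Applying Theorem~\ref{main2} with $n=1$ then yields that $\mu$ is uniquely determined by its moments $\mu_m$ for $m = 2,\dots, 2\cdot 1 + 1 = 3$, that is, by $\mu_2$ and $\mu_3$ alone.

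The remaining step is to convert the moments $\mu_2,\mu_3$ into the second and third correlations $\gamma^{(2)},\gamma^{(3)}$. By Theorem~\ref{momentsToCorrelations}(ii) one has $\gamma^{(m)} = \mu_m^{\rm red}$, and by part (iii) each $\mu_m$ with $m\geq 2$ is uniquely determined by its reduced version $\mu_m^{\rm red}$. Hence knowledge of $\gamma^{(2)}$ and $\gamma^{(3)}$ is equivalent to knowledge of $\mu_2$ and $\mu_3$, and combining this with the previous paragraph gives the claim.

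There is essentially no hard step in this argument; all the analytic content sits in Theorem~\ref{main2} (whose proof rests on the cycle-function machinery and the generation statement of Theorem~\ref{main}) and in the moment/correlation dictionary. The only point that genuinely deserves care is the ``iff'' nature of the correlation-to-moment passage: one must be sure that reconstructing the full moment $\mu_m$ from its reduced form $\mu_m^{\rm red}$ loses nothing, which is precisely what Theorem~\ref{momentsToCorrelations}(iii) guarantees. I expect this verification, together with the clean identification $n=1$, to be the entirety of the obstacle.
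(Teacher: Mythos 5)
Your proposal is correct and is exactly the paper's argument: the corollary is the $n=1$ case of Theorem~\ref{main2}, giving determination by $\mu_2$ and $\mu_3$, and parts (ii) and (iii) of Theorem~\ref{momentsToCorrelations} translate these moments into the second and third correlations. Nothing further is needed.
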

\begin{proof} This follows immediately from the previous theorem and parts (ii) and (iii) of Thm.~ \ref{momentsToCorrelations}.
\end{proof}

\section{Model sets}
The theory of model sets is a good place to find examples of the types of point processes
$(X,\mu)$ that we have been discussing.
We start with a cut and project scheme $(\RR^d, H, \tilde{L})$ with corresponding
`torus' $\TT := (\RR^d\times H)/ \tilde{L}$. We suppose that $H$ is a complete metric space. The canonical mapping from the projected image $L$ of $\tilde{L}$
in $\RR^d$ to its projected image in $H$ is denoted by $(\cdot)^\star$. We suppose also that we have a regular model set $\vL = \vL(W) :=\{x \in L \,:\, x^\star \in W \}$ given by some regular closed set $W \subset H$
(i.e. $W = \overline{W^\circ}$ and $W$ has boundary of measure $0$). In this case, the hull $X= X(\vL)$ of $\vL$ is
uniquely ergodic, all elements of $X$ share a common autocorrelation $\gamma$,
and the diffraction $\dif$ is pure point \cite{Martin}.
See \cite{M,BLM} for basic material on model sets.

We wish to consider the situation regarding the diffraction and extinctions.
Let $(\widehat{\RR^d}, \widehat H, \tilde{L}^\circ)$ be the corresponding dual
cut and project scheme, where $\tilde{L}^\circ$ is the dual group of $\TT$,
\cite{M},~Sec.~5. The natural projected image of $\tilde{L}^\circ$ in $\widehat{\RR^d}
\simeq \RR^d$ is denoted by $\ev$. We shall also use the $\star$-notation for the dual
cut and project scheme. Since the projection of
$\tilde{L}^\circ$ into $\widehat H$ has dense image, $\ev^\star$ is dense in
$\widehat H$.

 The diffraction of $\vL$ is known \cite{Hof2,Martin} to be given by
\begin{equation}\label{modelDiffraction}
 \dif(k) = |\widehat{1_W}(-k^\star)|^2  \, ,
 \end{equation}
 for all $k \in \ev$. Here, the measure of the torus $\TT$ is normalized to be one. 

 The set of extinctions $\ext$ is then the set of $k \in \ev$ for which $\widehat{1_W}(-k^\star)$ vanishes, and the Bragg peaks make up the set $\Bragg = \ev \backslash \ext$. We know that $\Bragg$ generates
 $\ev$ as a group and $0 \in \Bragg$.

\begin{lemma}  Suppose that $\overline{\ext^\star}$ has no interior.
Then $\Bragg + \Bragg =\ev$.
\end{lemma}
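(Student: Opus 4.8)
The plan is to push the whole statement into the internal space $\widehat H$ via the $\star$-map, which is a group homomorphism from $\ev$ onto the dense subgroup $\ev^\star \subset \widehat H$. The key observation I would record first is that, by \eqref{modelDiffraction}, membership in $\ext$ depends on $k$ only through $k^\star$: writing $E := \{\xi \in \widehat H : \widehat{1_W}(-\xi) = 0\}$, a vector $k \in \ev$ lies in $\ext$ precisely when $k^\star \in E$, so that $\ext^\star = \ev^\star \cap E$ and, for $k \in \ev$, one has $k \in \Bragg$ if and only if $k^\star \notin \ext^\star$. With this in hand the problem reduces to a point-selection statement: given an arbitrary $k \in \ev$, it suffices to find $\xi \in \ev^\star$ with $\xi \notin \ext^\star$ and $k^\star - \xi \notin \ext^\star$. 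Indeed, any $k_1 \in \ev$ with $k_1^\star = \xi$ then lies in $\Bragg$, and $k - k_1 \in \ev$ has star $k^\star - \xi$, so it lies in $\Bragg$ as well, giving $k = k_1 + (k-k_1)$.

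To carry out the selection I would argue purely topologically, and this is where the hypothesis enters. By assumption $\overline{\ext^\star}$ is closed with empty interior in $\widehat H$; since $\eta \mapsto k^\star - \eta$ is a homeomorphism, $k^\star - \overline{\ext^\star}$ is likewise closed with empty interior. A finite union of closed sets with empty interior again has empty interior (if the union contained a nonempty open set, deleting the first closed set would exhibit a nonempty open subset of the second), so
\[ C := \overline{\ext^\star} \cup (k^\star - \overline{\ext^\star}) \]
is closed with empty interior; in particular $C \neq \widehat H$, so $U := \widehat H \setminus C$ is open and nonempty. Since $\ev^\star$ is dense in $\widehat H$, it meets $U$, and any $\xi \in \ev^\star \cap U$ satisfies $\xi \notin \ext^\star$ and $k^\star - \xi \notin \ext^\star$, exactly as required. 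Pulling $\xi$ back to $\ev$ then yields the decomposition, and the reverse inclusion $\Bragg + \Bragg \subseteq \ev$ is immediate because $\ev$ is a group containing $\Bragg$.

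I expect the only genuinely delicate step to be the opening reduction: one must verify carefully that the extinction condition is a condition on the star alone, so that avoiding $\ext^\star$ within the image $\ev^\star$ coincides with avoiding the full zero set $E$, and so that $\xi$ may be chosen freely in $\ev^\star$. Once that is secured, the rest is a soft category-style argument; it is worth noting that it uses no Baire property of $\widehat H$, only the density of $\ev^\star$ and the fact that a closed set with empty interior cannot exhaust $\widehat H$.
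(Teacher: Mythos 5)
Your proof is correct and is essentially the paper's own argument: both reduce the problem to the internal space $\widehat H$ via the observation that, by \eqref{modelDiffraction}, membership in $\ext$ depends only on $k^\star$, and both conclude from the density of $\ev^\star$ together with the fact that $\overline{\ext^\star}$ and a single reflected translate $k^\star-\overline{\ext^\star}$ (two closed sets with empty interior) cannot fill out an open set. The only difference is presentational: you construct the witness $\xi$ directly, while the paper runs the same count as a contradiction and labels the two-nowhere-dense-sets fact as the Baire category theorem.
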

\begin{proof} We need to prove that every element of $\ext$ is the sum
of two elements of $\Bragg$. Let $K$ be the kernel of $(\phantom{n})^\star$.
From the form of the diffraction (\ref{modelDiffraction}), both $\Bragg$ and $\ext$
consist of unions of full cosets of $K$. Thus it suffices to show that every element of $\ext^\star$ is the sum of two elements of $\Bragg^\star$. Let $z\in \ext^\star$ and suppose that
$z$ is not so expressible, i.e.,
$(z-\Bragg^\star) \cap \Bragg^\star = \emptyset$. Then $z-\Bragg^\star \subset \ext^\star$, so
$\overline{\Bragg^\star} \subset z - \overline{\ext^\star}$. From this
\[ \widehat{H} = \overline{\ev^\star} = \overline{\Bragg^\star \cup \ext^\star} = \overline{\Bragg^\star}
\cup \overline{\ext^\star} \subset (z-\overline{\ext^\star}) \cup \overline{\ext^\star} \, ,\]
which is impossible by the Baire category theorem (since $\overline{\ext^\star}$ has no interior).
\end{proof}

\begin{remark}
One should note that even in the same cut and project scheme, different windows
can give rise to the same diffraction in \eqref{modelDiffraction}, even though the windows are not translationally equivalent. An example of this, that derives from the covariogram problem, is discussed, along with references to covariogram literature,  in \cite{BG}. This shows that the second moment alone is insufficient even to distinguish model sets from the same cut and project scheme.
\end{remark}

\section{Origins of the problem in questions of symmetry}\label{Origins}

The results in this paper have a number of points of contact with the ideas of D. Mermin
\cite{Mermin} and subsequent works of R.~Lifshitz,
D~.A.~Rabson, and B.~N. Fisher \cite{Lif, RF}. The starting point was a puzzle
which arose almost immediately
after the discovery of quasicrystals. The usual notions of symmetry from crystallography are not adequate in the theory of quasicrystals. First of all, translational symmetry is drastically diminished, often to the point of being non-existent. Second, even the finite symmetries are somewhat nebulous. `Icosahedrally symmetric' quasicrystals, for example, have perfectly isosahedrally symmetric diffraction patterns, but they need not be literally icosahedrally symmetric in the sense that the structure is mapped precisely onto itself by the point symmetries of the icosahedral group. In this section we will discuss this question and how it can be resolved through the use of dynamical systems.

Mermin's solution to the symmetry problem was to realize it on the Fourier side of the picture rather than the physical side. Briefly, the idea was to express the density distribution $\rho$ of the quasicrystal
as a superposition of plane waves from its translation module (what we have called
$\ev$ above):
\begin{equation}\label{formalFT}
 \rho(r) = \sum_{k\in \ev} \widehat{\rho}(k) e^{2 \pi i { k \cdot r}} \,,
 \end{equation}
and then to remark that two densities $\rho$ and $\rho'$ based on the
same module of wave vectors are physically
indistinguishable if their correlations of all orders are identical, something that should
happen if
\begin{equation} \label{formalCorrelation}
\widehat{\rho}({k_1}) \dots \widehat{\rho}({k_n}) = \widehat{\rho'}({k_1}) \dots
\widehat{\rho'}({k_n})
\end{equation}
for all $k_1, \dots k_n \in \ev$ with $k_1+ \dots + k_n=0$.  Thus symmetry
becomes symmetry in the sense of indistinguishable correlations.

Now  $\widehat\rho$ and $\widehat{\rho'}$ will be indistinguishable if for some
$\chi:\ev \longrightarrow \RR/\ZZ$ we have
\begin{equation}\label{indist}
\widehat{\rho'}({k}) = e^{2 \pi i \chi(k)} \widehat{\rho}({k}) \, .
\end{equation}
and a symmetry $g$ of $\ev$ would appear as a symmetry of the physical system
if, for all ${k} \in \ev$,
\[ \widehat\rho(g({k})) = e^{2 \pi i \chi_g(k)} \widehat\rho({k}) \]
for some suitable $\chi_g :\ev \rightarrow \RR/\ZZ$.

Applying \eqref{formalCorrelation} at $n=3$ in \eqref{indist} already gives
\[ \chi({k_1} + {k_2})  = \chi({k_1}) +\chi({k_2}) \]
for all ${k_1},{k_2} \in \ev$,
which already provides all the information derivable from \eqref{formalCorrelation}
for all other $n$. It was from this that Mermin concluded that  $2$- and $3$-point correlations should determine everything.

The main difficulty with this approach, and this is already made clear  in \cite{Mermin},
is to give any mathematical meaning to the expressions \eqref{formalFT}, except as tempered distributions. This issue is discussed in some detail by A.~ Hof in \cite{Hof2}. The measure $\rho$ representing the distribution of density of the quasicrystal, say $\rho = \sum_{x\in \vL} \delta_x$, is not in general Fourier transformable as a measure. If instead one treats it as a distribution then it is difficult to say what its Fourier transform looks like and particularly whether or not it is composed of a countable sum of weighted deltas at some (usually dense) subset of $\RR^d$. Certainly the formation of moments in this language would be a formidable task. The upshot of Hof's study of diffraction was his approach to it using the Fourier transform of the autocorrelation (which is Fourier transformable as a measure), and this has been the basis of most subsequent mathematical work on diffraction. Thus \cite{Mermin} is better seen as a formal vision of how things should work out rather than a rigorous exposition of how they actually do, and in that sense it prescient.

In our approach the sums \eqref{formalFT} appear in a form that is tamed by
averaging, namely the expressions appearing in  Thm.~\ref{basic}\,(iii). Such an expression is non-zero
only if ${k} \in \ev$ is the position of a Bragg peak, i.e. there is no extinction there.
In \cite{Mermin} it is claimed that extinctions can only occur due to symmetries, but the examples from model sets show that extinctions come from the Fourier transform
of the window function and do not seem to be related specifically to symmetries.
Furthermore the extinctions seem to be potential, but not absolute, obstructions to the correctness of the assertion about $2$- and $3$-point correlations, see
\S\ref{Final} below. 

In hindsight one can see that the symmetry question amounts to symmetries
of the correlations, hence of the moments, and ultimately of the measure $\mu$ itself.
An isometry $g$ of $\RR^d$ gives rise to a mapping $\vL \mapsto g\vL$. This $g$
is a {\bf symmetry of the point process} if $\mu$ is $g$-invariant. If $X$ is the support of
$\mu$, then this also entails that $g(X) \subset X$.
This is the idea of symmetry put forward by Radin in \cite{Radin0,Radin}.  

This symmetry can also be expressed  by means of  groupoids. In fact, there is a canonical groupoid  structure  associated to  the dynamical system $(X,\RR^d)$. Associated with it is the groupoid of the transversal (see \cite{Bell} and references therein). The point groups of symmetries of the system then act as isomorphisms of this transversal groupoid. The measure $\mu$ appears by giving a trace on the corresponding $C^*$-algebra. In this picture, invariance of the system under a symmetry means invariance of the trace under this symmetry. This invariance of the trace then amounts to invariance of the measure $\mu$, which in turn implies  invariance of diffraction and higher moments under the symmetry.

\section{Final comments}\label{Final}
There are a number of obvious questions that arise from this work. Foremost is the question of whether every cycle on $Z$ arises from some sort of dynamical system of density distributions on $\RR^d$. In \cite{LenzM} we have taken a first look at this. This requires changing the setting from the study of uniformly discrete subsets of $\RR^d$ to more general distributions of density arising by formalizing the main ideas of this
current paper.

There is also the interesting question of the real role of extinctions. Although we have seen that extinctions are an obstruction to our method of reconstructing the measure
$\mu$ from its moments, we have not shown that this obstruction is one of principle rather than an artifact of our approach. Inspired by the present paper this topic has been taken up recently: 
In \cite{XR2} it is shown that for model sets with real internal spaces, whether or not there are extinctions, the $2$- and
$3$-point correlations determine the model set within the class of all model sets. On the other hand, as soon as the internal space is not purely real model sets exist in which the extinctions force higher correlations to be used. The paper \cite{XR3} offers  examples of multi-atomic model sets in which the
scattering strengths of the different point types are differently weighted and for which
even the $2,3,4,5$-point correlations are insufficient to resolve them.

Another question is the curious role of odd numbers in our main result. Are there general conditions under which even numbered higher correlations play the defining role?

Finally it would be interesting to develop practical methods for actually constructing increasingly accurate atomic approximations to the densities using the increasing information from the $2$-, $3$-,  \dots correlations.

\bigskip

\medskip

\textbf{Acknowledgments.} RVM gratefully acknowledges
the support of this research by the Natural Sciences and Engineering Research
Council of Canada. Part of the work was done while DL had a visiting position at Rice University. He would  like to thank the Department of Mathematics there for its hospitality.

\end{document}